\def\isdraft{0}
\tikzset{every state/.style={minimum size=0pt}}
\newtheorem{theorem}{Theorem}
\newtheorem{corollary}[theorem]{Corollary}
\newtheorem{fact}[theorem]{Fact}
\newtheorem{lemma}[theorem]{Lemma}
\newtheorem{proposition}[theorem]{Proposition}
\theoremstyle{definition} 
\newtheorem{definition}[theorem]{Definition}
\newtheorem{example}[theorem]{Example}
\newtheorem{notation}[theorem]{Notation}
\newtheorem{problem}[theorem]{Problem}
\newtheorem{remark}[theorem]{Remark}
\title{Sequential composition of propositional logic programs}
\author{
    Christian Anti\'c
}
\address{
    christian.antic@icloud.com\\
    Vienna, Austria
}
\begin{document}
\begin{abstract} 
    This paper introduces and studies the sequential composition and decomposition of propositional logic programs. We show that acyclic programs can be decomposed into single-rule programs and provide a general decomposition result for arbitrary programs. We show that the immediate consequence operator of a program can be represented via composition which allows us to compute its least model without any explicit reference to operators. This bridges the conceptual gap between the syntax and semantics of a propositional logic program in a mathematically satisfactory way.
\end{abstract}
\maketitle

\section{Introduction}

Rule-based reasoning is an essential part of human intelligence prominently formalized in artificial intelligence research via logic programs \cite<cf.>{Apt90,Apt97,Hodges94,Lloyd87,Makowsky87,Sterling94}. Logic programming is a well-established subfield of theoretical computer science and AI with applications to such diverse fields as expert systems, database theory, diagnosis, planning, learning, natural language processing, and many others \cite<cf.>{Baral03,Coelho88,Kowalski79,Pereira02}. 

The propositional Horn fragment studied in this paper is particularly relevant in database theory via the query language datalog \cite<cf.>{Ceri90} and in answer set programming \cite{Brewka11}, a prominent and successful dialect of logic programming incorporating negation as failure \cite{Clark78} and many other constructs such as aggregates \cite<cf.>{Faber04,Pelov04}, external sources \cite{Eiter05}, and description logic atoms \cite{Eiter08a}.

Describing complex programs as the composition of simpler ones is a common strategy in computer programming and logic programming is no exception as is witnessed by the large amount of research on \textbf{modular logic programming} in the 1980s and 1990s \shortcite<e.g.>{Bossi96,Brogi92,Brogi92a,Brogi95,Brogi99,Bugliesi94,Dong90,Gaifman89,Hill94,Mancarella88,OKeefe85}, and more recently on modular answer set programming \shortcite<e.g.>{Oikarinen06,Oikarinen06a,DaoTran09}.

The \textbf{purpose of this paper} is to add to the logic programmer's repertoire another modularity operation in the form of the sequential composition of propositional logic programs, which is naturally induced by their rule-like structure. The close relationship to the modularity operations introduced in \citeA{OKeefe85} and \citeA{Bugliesi94} is discussed in detail in \prettyref{sec:RW}. 

The \textbf{motivation} for introducing a novel operation is because we feel the need for a \textit{syntactic} composition operation complementing the semantic ones from the literature\footnote{This is partly motivated by recent work on algebraic logic program synthesis via analogical proportions in \citeA{Antic23-23}.} --- in \prettyref{sec:RW} we show that the semantic composition operators can be recovered from the syntactic one (but not vice versa which is evident from their definition). Thus, in a sense, syntactic composition is more fundamental than its semantic counterpart.

The \textbf{main results} of this paper can be summarized as follows:
\begin{enumerate}
    \item We introduce the sequential composition of propositional logic programs and show that the space of all propositional logic programs over some fixed alphabet is closed under composition which means that composition is a total function defined for all pairs of programs. Composition is in general non-associative (\prettyref{exa:non-associativity}). The composition of propositional Krom programs consisting only of rules with at most one body atom is associative and gives rise to the algebraic structure of a monoid, and the restricted class of proper propositional Krom programs consisting only of rules with exactly one body atom yields an idempotent semiring (\prettyref{thm:Krom}).

    \item In \prettyref{sec:Decomposition}, we study {\em decompositions} of programs. The main results here are that acyclic programs \cite<cf.>{Apt91} can be decomposed into a product of single-rule programs (\prettyref{thm:acyclic}), followed by some general results on decompositions of programs in \prettyref{sec:General_Decompositions}.

    \item On the semantic side, the van Emden-Kowalski immediate consequence operator $T_P$ is at the core of logic programming and we show that it can be represented via composition (\prettyref{thm:T_P}), which allows us to compute its least model semantics without any explicit reference to operators (\prettyref{thm:LM}). This bridges the conceptual gap between the syntax and semantics of a propositional logic program in a mathematically satisfactory way.

\end{enumerate}

In a broader sense, this paper is a further step towards an algebra of logic programs and in the future we plan to adapt and generalize the methods of this paper to wider classes of programs, most importantly to first-, and higher-order logic programs \cite{Apt90,Chen93,Lloyd87,Miller12}, and non-monotonic logic programs under the stable model \cite{Gelfond91} or answer set semantics and extensions thereof \cite<cf.>{Baral03,Brewka11,Eiter09,Lifschitz19} --- see \citeA{Antic21-2}.


\section{Preliminaries}

In this section, we first recall the algebraic structures occurring in the rest of the paper, and then we recall the syntax and semantics of propositional logic programs.

\subsection{Algebraic Structures}\label{sec:Algebraic_Structures}

We recall some basic algebraic notions and notations by mainly following the lines of \citeA{Howie03}. 

Given two sets $A$ and $B$, we write $A\subseteq_k B$ in case $A$ is a subset of $B$ with $k$ elements, for some non-negative integer $k$.
We denote the {\em identity function} on a set $A$ by $Id_A$. A {\em permutation} of a set $A$ is any mapping $A\to A$ which is one-to-one and onto. We denote the {\em composition} of two functions $f:A\to A$ and $g:A\to A$ by $f\circ g$ with the usual definition $(f\circ g)(x)=f(g(x))$.

We call a binary relation $\leq$ {\em reflexive} if $x\leq x$, {\em anti-symmetric} if $x\leq y$ and $y\leq x$ implies $x=y$, and {\em transitive} if $x\leq y$ and $y\leq z$ implies $x\leq z$, for all $x,y,z$. A {\em partially ordered set} (or {\em poset}) is a set $L$ together with a reflexive, transitive, and anti-symmetric binary relation $\leq$ on $L$. A {\em prefixed point} of an operator $f$ on a poset $L$ is any element $x\in L$ such that $f(x)\leq x$; moreover, we call any $x\in L$ a {\em fixed point} of $f$ if $f(x)=x$.

A {\em magma} is a set $M$ together with a binary operation $\cdot$ on $M$. We call $(M,\cdot,1)$ a {\em unital} magma if it contains a unit element 1 such that $1x=x1=x$ holds for all $x\in M$. A {\em semigroup} is a magma $(S,\cdot)$ in which $\cdot$ is associative. A {\em monoid} is a semigroup containing a unit element 1 such that $1x=x1=x$ holds for all $x$. A {\em group} is a monoid which contains an inverse $x^{-1}$ for every $x$ such that $xx^{-1}=x^{-1}x=1$. A {\em left} (resp., {\em right}) {\em zero} is an element $0$ such that $0x=0$ (resp., $x0=0$) holds for all $x\in S$. An {\em ordered} semigroup is a semigroup $S$ together with a partial order $\leq$ that is compatible with the semigroup operation, meaning that $x\leq y$ implies $zx\leq zy$ and $xz\leq yz$ for all $x,y,z\in S$. An {\em ordered} monoid is defined in the obvious way. 
A non-empty subset $I$ of $S$ is called a {\em left} (resp., {\em right}) {\em ideal} if $SI\subseteq I$ (resp., $IS\subseteq I$), and a ({\em two-sided}) {\em ideal} if it is both a left and right ideal. An element $x\in S$ is {\em idempotent} if $x\cdot x=x$.

A {\em seminearring} is a set $S$ together with two binary operations $+$ and $\cdot$ on $S$, and a constant $0\in S$, such that $(S,+,0)$ is a monoid and $(S,\cdot)$ is a semigroup satisfying the following laws:
\begin{enumerate}
    \item $(x+y)\cdot z=x\cdot z+y\cdot z$ for all $x,y,z\in S$ (right-distributivity); and
    \item $0\cdot x=0$ for all $x\in S$ (left zero).
\end{enumerate} We say that $S$ is {\em idempotent} if $x+x=x$ holds for all $x\in S$. Typical examples of seminearrings are given by the set of all mappings on a monoid together with composition of mappings, point-wise addition of mappings, and the zero function.

A {\em semiring} is a seminearring $(S,+,\cdot,0)$ such that $+$ is commutative and additionally to the laws of a seminearring the following laws are satisfied:
\begin{enumerate}
    \item $x\cdot (y+z)=x\cdot y+x\cdot z$ for all $x,y,z\in S$ (left-distributivity); and
    \item $x\cdot 0=0$ for all $x\in S$ (right zero).
\end{enumerate} For example, $(\mathbb N,+,\cdot,0)$ and $(2^A,\cup,\cap,\emptyset)$ are semirings.

\subsection{Propositional Logic Programs}\label{sec:Horn}

We recall the syntax and semantics of propositional logic programs.

\subsubsection{Syntax}\label{sec:Syntax}

In the rest of the paper, $A$ denotes a finite alphabet of propositional atoms.

A ({\em propositional Horn logic}) {\em program} over $A$ is a finite set of {\em rules} of the form
\begin{align}\label{equ:r} 
    a_0\leftarrow a_1,\ldots,a_k,\quad k\geq 0,
\end{align} where $a_0,\ldots,a_k\in A$ are propositional atoms. It will be convenient to define, for a rule $r$ of the form \prettyref{equ:r},
\begin{align*} 
     h(r):=\{a_0\} \quad\text{and}\quad  b(r):=\{a_1,\ldots,a_k\}
\end{align*} extended to programs by
\begin{align*} 
     h(P):=\bigcup_{r\in P} h(r) \quad\text{and}\quad  b(P):=\bigcup_{r\in P} b(r).
\end{align*} In this case, the {\em size} of $r$ is $k$ and denoted by $sz(r)$. 

A {\em fact} is a rule with empty body and a {\em proper rule} is a rule which is not a fact. We denote the facts and proper rules in $P$ by $ f(P)$ and $ p(P)$, respectively. 

We call a rule $r$ of the form \prettyref{equ:r} {\em Krom}\footnote{Krom rules were first introduced and studied by \cite{Krom67} and are sometimes called ``binary'' in the literature; we prefer ``Krom'' here since we reserve ``binary'' for programs with two body atoms.} if it has at most one body atom, and we call $r$ {\em binary} if it contains at most two body atoms. A {\em tautology} is any Krom rule of the form $a\leftarrow a$, for $a\in A$. We call a program {\em Krom} if it contains only Krom rules, and we call it {\em binary} if it consists only of binary rules. 

A program is called a {\em single-rule program} if it contains exactly one non-tautological rule. 

We call a program {\em minimalist} if it contains at most one rule for each rule head.


Given two programs $P$ and $R$, we say that $P$ {\em depends on} $R$ if the intersection of $ b(P)$ and $ h(R)$ is not empty. 

We call $P$ {\em acyclic} if there is a mapping $\ell:A\to\{0,1,2,\ldots\}$ such that for each rule $r\in P$, we have $\ell( h(r))>\ell( b(r))$, and in this case we call $\ell$ a {\em level mapping} for $P$. Of course, every level mapping $\ell$ induces an ordering on rules via $r\leq_\ell s$ if $\ell( h(r))\leq\ell( h(s))$. We can transform this ordering into a total ordering by arbitrarily choosing a particular linear ordering within each level, that is, if $\ell(a)=\ell(b)$ then we can choose between $a<_\ell b$ or $b<_\ell a$.

\begin{example}\label{exa:exa} The program
\begin{align*} 
    P:= \left\{
    \begin{array}{l}
        a\\
        b\leftarrow a\\
        c\leftarrow a,b            
    \end{array}
    \right\}
\end{align*} is acyclic with level mapping $\ell(a)=0$, $\ell(b)=1$, $\ell(c)=2$. Adding the rule $a\leftarrow c$ to $P$ yields a non-acyclic program since $\ell(a)\not >\ell(c)$.
\end{example}

Define the {\em dual} of $P$ by
\begin{align*} 
    P^  d:= f(P)\cup\{b\leftarrow  h(r)\mid r\in  p(P),b\in  b(r)\}.
\end{align*} Roughly, we obtain the dual of a program by reversing all the arrows of its proper rules.

\subsubsection{Semantics}

An {\em interpretation} is any set of atoms from $A$. We define the {\em entailment relation}, for every interpretation $I$, inductively as follows: (i) for an atom $a$, $I\models a$ if $a\in I$; (ii) for a set of atoms $B$, $I\models B$ if $B\subseteq I$; (iii) for a rule $r$ of the form \prettyref{equ:r}, $I\models r$ if $I\models  b(r)$ implies $I\models  h(r)$; and, finally, (iv) for a propositional logic program $P$, $I\models P$ if $I\models r$ holds for each rule $r\in P$. In case $I\models P$, we call $I$ a {\em model} of $P$. The set of all models of $P$ has a least element with respect to set inclusion called the {\em least model} of $P$ and denoted by $LM(P)$. We say that $P$ and $R$ are {\em equivalent} if $LM(P)=LM(R)$.

Define the {\em van Emden-Kowalski operator} of $P$, for every interpretation $I$, by
\begin{align*} 
    T_P(I):=\{ h(r)\mid r\in P:I\models  b(r)\}.
\end{align*} The van Emden-Kowalski operator is at the core of logic programming since we have the following well-known operational characterization of models \cite{vanEmden76}:

\begin{proposition}\label{prop:prefixed} An interpretation $I$ is a model of $P$ iff $I$ is a prefixed point of $T_P$.
\end{proposition}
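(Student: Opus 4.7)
The plan is to prove both directions by unfolding the definitions of $\models$ and $T_P$, since the statement is essentially a repackaging of the rule-satisfaction condition as a containment between sets of atoms. No nontrivial machinery is required beyond the definitions laid out in Section~\ref{sec:Horn}; the only thing to check is that no case is lost when the quantification over rules is rearranged.

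For the forward direction, I would assume $I\models P$ and show $T_P(I)\seq I$. Pick an arbitrary $a\in T_P(I)$. By definition of $T_P$ there is a rule $r\in P$ with $head(r)=\{a\}$ such that $I\models body(r)$. Since $I\models P$ implies $I\models r$, the premise $I\models body(r)$ yields $I\models head(r)$, i.e., $a\in I$. Hence $T_P(I)\seq I$, which is what it means for $I$ to be a prefixed point of $T_P$.

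For the reverse direction, I would assume $T_P(I)\seq I$ and show $I\models r$ for every rule $r\in P$ of the form \ref{equ:r}. Fix $r$ and suppose $I\models body(r)$; otherwise $I\models r$ holds vacuously. By the definition of $T_P$ we then have $head(r)\in T_P(I)$, and by assumption $T_P(I)\seq I$, so $head(r)\in I$, i.e., $I\models head(r)$. Thus $I\models r$, and since $r$ was arbitrary, $I\models P$.

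There is essentially no obstacle here; the only subtlety to flag is the treatment of facts (rules with $k=0$ in \ref{equ:r}), for which $body(r)=\emptyset$ and hence $I\models body(r)$ holds trivially for every $I$. In that case the forward argument gives $head(r)\in I$ directly from $head(r)\in T_P(I)\seq I$, and the backward argument likewise forces $head(r)\in I$ whenever $I\models P$, so facts are handled uniformly by the same bookkeeping.
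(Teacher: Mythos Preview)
Your proof is correct and complete; the two directions are exactly the standard unfolding of the definitions, and your remark on facts is accurate. Note, however, that the paper does not supply its own proof of this proposition: it is stated as a well-known result and attributed to \cite{vanEmden76}, so there is nothing to compare against---your argument simply fills in what the paper leaves to the literature.
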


We call an interpretation $I$ a {\em supported model} of $P$ if $I$ is a fixed point of $T_P$. We say that $P$ and $R$ are {\em subsumption equivalent} \cite{Maher88} if $T_P=T_R$, denoted by $P\equiv_{ss}R$.

The {\em least fixed point} computation of $T_P$ is defined by
\begin{align*} 
    &T_P^0=\emptyset,\\
    &T_P^{n+1}=T_P(T_P^n),\\
    &T_P^\infty=\bigcup_{n\geq 0}T_P^n.
\end{align*} The following constructive characterization of least models is due to \cite{vanEmden76}.

\begin{proposition}\label{prop:LM} The least model of a propositional logic program coincides with the least fixed point of its associated van Emden-Kowalski operator, that is, for any program $P$ we have
\begin{align} 
    LM(P)= T_P^\infty.
\end{align}
\end{proposition}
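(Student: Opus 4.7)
The proof plan is to combine Proposition~\ref{prop:prefixed} with a standard Knaster--Tarski style argument applied to the operator $T_P$ on the complete lattice of interpretations $(\mathcal P(A),\seq)$, which is a complete lattice since $A$ is finite.

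First I would verify that $T_P$ is monotone: if $I\seq J$ then for every rule $r\in P$ with $I\models body(r)$ we also have $J\models body(r)$, since $body(r)\seq I\seq J$. Hence $T_P(I)\seq T_P(J)$. This is the only calculational step and is routine.

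Next I would invoke the Knaster--Tarski fixed-point theorem on the complete lattice $(\mathcal P(A),\seq)$: every monotone operator $f$ on a complete lattice has a least fixed point equal to the meet of all its prefixed points, i.e.\ $lfp(f)=\bigcap\{X\mid f(X)\seq X\}$, and this meet is itself a fixed point of $f$. Applied to $T_P$, this gives $lfp(T_P)=\bigcap\{I\mid T_P(I)\seq I\}$.

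Finally I would connect this to $LM(P)$ via Proposition~\ref{prop:prefixed}: the prefixed points of $T_P$ are exactly the models of $P$, so
\begin{align*} lfp(T_P)=\bigcap\{I\seq A\mid T_P(I)\seq I\}=\bigcap\{I\seq A\mid I\models P\}=LM(P),
\end{align*}
where the last equality uses that the intersection of all models is a model (which follows, in turn, from the fact that prefixed points of a monotone operator on a complete lattice are closed under arbitrary intersections, reproving existence of the least model along the way). There is no real obstacle; the only thing to be mindful of is quoting Knaster--Tarski in the form that identifies the least fixed point with the least prefixed point, since it is the latter characterization that matches the models via Proposition~\ref{prop:prefixed}. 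Since $A$ is finite, one could alternatively use Kleene iteration $\bigcup_{n\geq 0}T_P^n(\0)$, which terminates in finitely many steps, and this more constructive description may be preferable given the algorithmic flavor of the paper.
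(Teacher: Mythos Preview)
Your argument is correct; the Knaster--Tarski route via Proposition~\ref{prop:prefixed} (or the Kleene iteration you mention) is exactly the standard proof. The paper itself does not prove this proposition at all: it is quoted as a classical result due to van Emden and Kowalski and left without proof, so there is nothing further to compare.
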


\begin{example} Reconsider the program $P$ of \prettyref{exa:exa}. Intuitively, we expect the least model of $P$ to include all the atoms $a,b,$ and $c$. The following least fixed point computation shows that this is indeed the case:
\begin{align*} 
    &T_P(\emptyset)=\{a\}\\
    &T_P(\{a\})=\{a,b\}\\
    &T_P(\{a,b\})=\{a,b,c\}\\
    &T_P(\{a,b,c\})=\{a,b,c\}.
\end{align*}
\end{example}

\section{Sequential Composition}\label{sec:Sequential_Composition}

We are interested in the algebraic structure of the space of all propositional logic programs. For this we define the sequential composition operation on programs and show in \prettyref{thm:Horn} that the so-obtained space is closed under such compositions. This will allow us later to decompose programs into simpler ones (\prettyref{sec:Decomposition}), to represent the immediate consequence operator on syntactic level (\prettyref{sec:T_P}), and to define an algebraic semantics of programs without any explicit reference to operators (\prettyref{sec:LM}).

\begin{notation} In the rest of the paper, $P$ and $R$ denote propositional logic programs over some joint alphabet $A$.
\end{notation}

We are ready to introduce the main notion of the paper.

\begin{definition}\label{def:PR} We define the ({\em sequential}) {\em composition} of $P$ and $R$ by
\begin{align*} 
    P\circ R=\left\{ h(r)\leftarrow  b(S)\;\middle|\; r\in P, S\subseteq_{sz(r)} R, h(S)= b(r)\right\}.
\end{align*} We will write $PR$ in case the composition operation is understood.
\end{definition}

Roughly, we obtain the composition of $P$ and $R$ by resolving all body atoms in $P$ with `matching' rule heads of $R$. The following example shows why we use $\subseteq_{sz(r)}$ instead of $\subseteq$ in the above definition.

\begin{example} Let $r$ and $R$ be given by
\begin{align*} 
    r:=a\leftarrow b \quad\text{and}\quad R:= \left\{
    \begin{array}{l}
        b\leftarrow c\\
        b\leftarrow d             
    \end{array}
    \right\}.
\end{align*} Since $r$ consists of a single body atom, we have $sz(r)=1$. This means that $\{r\}\circ R$ consists of all rules $a\leftarrow  b(S)$ so that $S\subseteq_1 R$ (i.e., $S$ consists of a single rule) and $ h(S)=\{b\}$. This yields
\begin{align*} 
    \{r\}\circ R= \left\{
    \begin{array}{l}
        a\leftarrow c\\
        a\leftarrow d             
    \end{array}
    \right\}.
\end{align*} This coincides with what we intuitively expect from the composition of $r$ and $R$. On the other hand, if we define composition using $\subseteq$ instead of $\subseteq_{sz(r)}$, then the above requirement changes to $S\subseteq R$ such that $ h(S)=\{b\}$, which means that we can now put $S=R$ adding the undesired rule $a\leftarrow  b(S)=a\leftarrow c,d$ to $\{r\}\circ R$.
\end{example}

Notice that we can reformulate sequential composition as
\begin{align}\label{equ:bigcup} 
    P\circ R=\bigcup_{r\in P}(\{r\}\circ R),
\end{align} which directly implies right-distributivity of composition, that is,
\begin{align}\label{equ:(P_cup_Q)_circ_R} 
    (P\cup Q)\circ R=(P\circ R)\cup (Q\circ R)\quad\text{holds for all propositional logic programs }P,Q,R.
\end{align}

\begin{example} The following counter-example shows that left-distributivity fails in general:  
\begin{align*} 
    \{a\leftarrow b,c\}\circ(\{b\}\cup\{c\})=\{a\} \quad\text{whereas}\quad (\{a\leftarrow b,c\}\circ\{b\})\cup(\{a\leftarrow b,c\}\circ\{c\})=\emptyset.
\end{align*} 
\end{example}

We can write $P$ as the union of its facts and proper rules, that is,
\begin{align}\label{equ:P} 
    P= f(P)\cup  p(P).
\end{align} Hence, we can rewrite the composition of $P$ and $R$ as
\begin{align}\label{equ:facts_proper} 
    PR&=( f(P)\cup  p(P))R\stackrel{\prettyref{equ:(P_cup_Q)_circ_R}}= f(P)R\cup  p(P)R
    = f(P)\cup  p(P)R,
    \end{align} which shows that the facts in $P$ are preserved by composition, that is, we have
\begin{align}\label{equ:facts_PR} 
     f(P)\subseteq  f(PR).
\end{align} The facts of the composition of to two programs is given by
\begin{align*} 
     f(PR)\stackrel{\prettyref{equ:facts_proper}}= f( f(P)\cup  p(P)R)= f( f(P))\cup  f( p(P)R)= f(P)\cup  p(P) f(R).
\end{align*} We can compute the heads and bodies via
\begin{align}\label{equ:head_P} 
     h(P)=PA \quad\text{and}\quad  b(P)= p(P)^  d A.
\end{align} Moreover, we have
\begin{align*} 
     h(PR)\subseteq  h(P) \quad\text{and}\quad  b(PR)\subseteq  b(R).
\end{align*}

Define the {\em unit program} (over $A$) by the propositional Krom program
\begin{align*} 
    1_A:=\{a\leftarrow a\mid a\in A\}.
\end{align*} In the sequel, we will often omit the reference to $A$.

We are now ready to state the main structural result of the paper:

\begin{theorem}\label{thm:Horn} The space of all propositional logic programs over some fixed alphabet forms a finite unital magma with respect to composition ordered by set inclusion with the neutral element given by the unit program. Moreover, the empty program is a left zero and composition distributes from the right over union, that is, for any programs $P,Q,R$ we have
\begin{align} 
    P1&=1P=P\\
    \label{equ:0P=0}\emptyset P&=\emptyset\\
    \label{equ:P_cup_R_circ_R} (P\cup R)Q&=(PQ)\cup (RQ).
\end{align}
\end{theorem}
\begin{proof} The space of all propositional logic programs is obviously closed under composition, which shows that it forms a magma.

We proceed by proving that 1 is neutral with respect to composition. By definition of composition, we have
\begin{align*} 
    P1&=\{ h(r)\leftarrow  b(S)\mid r\in P,S\subseteq_{sz(r)} 1, h(S)= b(r)\}.
\end{align*} Now, by definition of 1 and $S\subseteq_{sz(r)} 1$, we have $ h(S)= b(S)$ and therefore $ b(S)= b(r)$. Hence,
\begin{align*} 
    P1=P.
\end{align*} Similarly, we have
\begin{align*} 
    1P&=\{ h(r)\leftarrow  b(S)\mid r\in 1,S\subseteq_1 P, h(S)= b(r)\}.
\end{align*} As $S$ is a subset of $P$ with a single element, $S$ is a singleton $S=\{s\}$, for some rule $s$ with $ h(s)= b(r)$ and, since $ b(r)= h(r)$ holds for every rule in 1, $ h(s)= h(r)$. Hence,
\begin{align*} 
    1P&=\{ h(r)\leftarrow  b(s)\mid r\in 1, s\in P, h(s)= b(r)\}\\
    &=\{ h(s)\leftarrow  b(s)\mid s\in P\}\\
    &=P.
\end{align*} This shows that 1 is neutral with respect to composition. That composition is ordered by set inclusion is obvious. We now turn our attention to the operation of union. In \prettyref{equ:(P_cup_Q)_circ_R} we argued for the right-distributivity of composition. That the empty set is a left zero is obvious. As union is idempotent, the set of all propositional logic programs forms the structure of a finite idempotent seminearring.
\end{proof}

The following example shows that, unfortunately, composition is {\em not} associative.\footnote{In \prettyref{thm:PMN} we will see that composition is associative for minimalist programs and in \prettyref{cor:assoc_ss} we will see that it is associative modulo subsumption equivalence.}

\begin{example}\label{exa:non-associativity} Consider the rule
\begin{align*} 
    r:=a\leftarrow b,c
\end{align*} and the programs
\begin{align*} 
    P:= \left\{
    \begin{array}{l}
        b\leftarrow b\\
        c\leftarrow b,c
    \end{array}
    \right\} \quad\text{and}\quad R:= \left\{
    \begin{array}{l}
        b\leftarrow d\\
        b\leftarrow e\\
        c\leftarrow f
    \end{array}
    \right\}.
\end{align*} Let us compute $(\{r\}P)R$. We first compute $\{r\}P$ by noting that the rule $r$ has two body atoms and has therefore size 2, which means that there is only a single choice of subprogram $S$ of $P$ with two rules, namely $S=P$; this yields
\begin{align*} 
    \{r\}P=\{ h(r)\leftarrow  b(P)\}=\{a\leftarrow b,c\}=\{r\}.
\end{align*} Next we compute $(\{r\}P)R=\{r\}R$ by noting that now there are two possible $S_1,S_2\subseteq_2 R$ with $ h(S_1)= h(S_2)= b(r)$, given by
\begin{align*} 
    S_1= \left\{
    \begin{array}{l}
        b\leftarrow d\\
        c\leftarrow f             
    \end{array}
    \right\} \quad\text{and}\quad S_2= \left\{
    \begin{array}{l}
        b\leftarrow e\\
        c\leftarrow f             
    \end{array}
    \right\}.
\end{align*} This yields
\begin{align*} 
    \{r\}R= \left\{
    \begin{array}{l}
         h(r)\leftarrow  b(S_1)\\             
         h(r)\leftarrow  b(S_2)             
    \end{array}
    \right\}= \left\{
    \begin{array}{l}
        a\leftarrow d,f\\             
        a\leftarrow e,f             
    \end{array}
    \right\}.
\end{align*} Let us now compute $\{r\}(PR)$. We first compute $PR$. By \prettyref{equ:bigcup} we have
\begin{align*} 
    PR=\{b\leftarrow b\}R\cup \{c\leftarrow b,c\}R.
\end{align*} We easily obtain
\begin{align*} 
    \{b\leftarrow b\}R= \left\{
    \begin{array}{l}
        b\leftarrow d\\             
        b\leftarrow e             
    \end{array}
    \right\}.
\end{align*} Similar computations as above show
\begin{align*} 
    \{c\leftarrow b,c\}R= \left\{
    \begin{array}{l}
        c\leftarrow d,f\\             
        c\leftarrow d,e             
    \end{array}
    \right\}.
\end{align*} So in total we have
\begin{align*} 
    PR= \left\{
    \begin{array}{l}
        b\leftarrow d\\             
        b\leftarrow e\\
        c\leftarrow d,f\\             
        c\leftarrow d,e             
    \end{array}
    \right\}.
\end{align*} To compute $\{r\}(PR)$, we therefore see that there are four two rule subprograms $S_1,S_2,S_4,S_4\subseteq_2 PR$ with $b$ or $c$ in their heads given by
\begin{align*} 
    S_1= \left\{
    \begin{array}{l}
        b\leftarrow d\\
        c\leftarrow d,f             
    \end{array}
    \right\} \qquad S_2= \left\{
    \begin{array}{l}
        b\leftarrow d\\
        c\leftarrow d,e             
    \end{array}
    \right\} \qquad S_3= \left\{
    \begin{array}{l}
        b\leftarrow e\\
        c\leftarrow d,f             
    \end{array}
    \right\} \qquad S_4= \left\{
    \begin{array}{l}
        b\leftarrow e\\
        c\leftarrow d,f             
    \end{array}
    \right\}.
\end{align*} Hence, we have
\begin{align*} 
    \{r\}PR= \left\{
    \begin{array}{l}
         h(r)\leftarrow  b(S_1)\\             
         h(r)\leftarrow  b(S_2)\\             
         h(r)\leftarrow  b(S_3)\\             
         h(r)\leftarrow  b(S_4)
    \end{array}
    \right\}= \left\{
    \begin{array}{l}
        a\leftarrow d,f\\
        a\leftarrow d,e\\
        a\leftarrow d,e,f             
    \end{array}
    \right\}.
\end{align*} We have thus shown
\begin{align*} 
    \{r\}(PR)= \left\{
    \begin{array}{l}
        a\leftarrow d,f\\
        a\leftarrow e,f\\
        a\leftarrow d,e,f\\
    \end{array}
    \right\}\neq \left\{
    \begin{array}{l}
        a\leftarrow d,f\\
        a\leftarrow e,f
    \end{array}
    \right\}=(\{r\}P)R.
\end{align*}
\end{example}

\section{Restricted Classes of Programs}\label{sec:Restricted}

Now that we have a basic understanding of the sequential composition of programs, we continue by studying composition in some restricted classes of programs like minimalist (\prettyref{sec:Minimalist_Programs}) and Krom programs (\prettyref{sec:Krom_Programs}) where we find that associativity holds. Having a basic understanding of those restricted classes of programs will be useful later when we deal with decompositions of programs where factors of programs often have one of those restricted forms.

\subsection{Minimalist Programs}\label{sec:Minimalist_Programs}

Recall that we call a program {\em minimalist} if it contains at most one rule for each head atom. \prettyref{thm:PMN} shows that minimalist programs satisfy a form of associativity.

Given a minimalist program $M$, in the computation of $P\circ M$ we can omit the reference to $r$ in $\subseteq_{sz(r)}$ in \prettyref{def:PR}, that is, we have
\begin{align*} 
    P\circ M=\{ h(r)\leftarrow  b(S)\mid r\in P,S\subseteq M, h(S)= b(r)\}.
\end{align*}

In \prettyref{exa:non-associativity} we have seen that composition is not associative in general. The situation changes for minimalist programs.

\begin{theorem}\label{thm:PMN} For any programs $P,M,N$ if $M$ and $N$ are minimalist, then
\begin{align}\label{equ:PMN} 
    (PM)N=P(MN).
\end{align}
\end{theorem}
\begin{proof}
As a consequence of \prettyref{equ:bigcup}, a rule $r$ is in $(PM)N$ iff
\begin{align}\label{equ:rr} 
    \{r\}=\left(\{a_0\leftarrow a_1,\ldots,a_k\}\circ \left\{
    \begin{array}{l}
        a_1\leftarrow B_1\\
        \vdots\\
        a_k\leftarrow B_k
    \end{array}
    \right\}\right)\circ \left\{
    \begin{array}{l}
        b_1\leftarrow C_1\\
        \vdots\\
        b_m\leftarrow C_m
    \end{array}
    \right\},
\end{align} for some rules $a_0\leftarrow a_1,\ldots,a_k\in P$, $k\geq 0$, $b_1\leftarrow B_1,\ldots,b_k\leftarrow B_k\in M$, $b_1\leftarrow C_1,\ldots,b_m\leftarrow C_m\in N$, and $B_1\cup\ldots\cup B_k=\{b_1,\ldots,b_m\}$, $m\geq 0$. Since
\begin{align*} 
    \left\{
    \begin{array}{l}
        a_1\leftarrow B_1\\
        \vdots\\
        a_k\leftarrow B_k
    \end{array}
    \right\} \quad\text{and}\quad \left\{
    \begin{array}{l}
        b_1\leftarrow C_1\\
        \vdots\\
        b_m\leftarrow C_m
    \end{array}
    \right\}
\end{align*} are minimalist, we can rewrite \prettyref{equ:rr} as
\begin{align*} 
    \{r\}=\{a_0\leftarrow a_1,\ldots,a_k\}\circ \left(\left\{
    \begin{array}{l}
        a_1\leftarrow B_1\\
        \vdots\\
        a_k\leftarrow B_k
    \end{array}
    \right\}\circ \left\{
    \begin{array}{l}
        b_1\leftarrow C_1\\
        \vdots\\
        b_m\leftarrow C_m
    \end{array}
    \right\}\right),
\end{align*} which shows $r\in P(MN)$. The proof that every rule in $P(MN)$ is in $(PM)N$ is analogous.
\end{proof}

\subsection{Krom Programs}\label{sec:Krom_Programs}

Recall that we call a program {\em Krom} if it contains only rules with at most one body atom. This includes interpretations, unit programs, and permutations. Krom programs often appear as factors in decompositions of programs and it is therefore beneficial to know their basic algebraic properties (\prettyref{thm:Krom}). First notice that for any Krom program $K$ and any program $P$, composition simplifies to
\begin{align*} 
    K\circ P= f(K)\cup\{a\leftarrow B\mid a\leftarrow b\in  p(K)\text{ and }b\leftarrow B\in P\}.
\end{align*}

We have the following structural result as a specialization of \prettyref{thm:Horn}:

\begin{theorem}\label{thm:Krom} The space of all propositional Krom programs forms a monoid with respect to composition with the neutral element given by the unit program, and it forms a seminearring with respect to sequential composition and union. More generally, we have
\begin{align}\label{equ:KPR} 
    K(PR)=(KP)R,
\end{align} for arbitrary programs $P$ and $R$. Moreover, Krom programs distribute from the left, that is, for any programs $P$ and $R$, we have
\begin{align}\label{equ:KP_cup_R} 
    K(P\cup R)=KP\cup KR.
\end{align} This implies that the space of proper propositional Krom programs forms a finite idempotent semiring.
\end{theorem}
\begin{proof} We first prove \prettyref{equ:KPR}, which implies associativity for propositional Krom programs. A rule $r$ is in $K(PR)$ iff either $r=a\in K$, in which case we have $r\in (KP)R$, or
\begin{align*} 
    \{r\}=\{a\leftarrow b\}\circ\left(\{b\leftarrow b_1,\ldots,b_k\}\circ\left\{
    \begin{array}{l}
        b_1\leftarrow B_1\\
        \vdots\\
        b_k\leftarrow B_k
    \end{array}
    \right\}\right),
\end{align*} for some $a\leftarrow b\in K$, $b\leftarrow b_1,\ldots,b_k\in P$, and $b_1\leftarrow B_1,\ldots,b_k\leftarrow B_k\in R$, in which case we have $r=a\leftarrow B_1\cup\ldots\cup B_k$. A simple computation shows
\begin{align*} 
    \{r\}=(\{a\leftarrow b\}\circ\{b\leftarrow b_1,\ldots,b_k\})\circ\left\{
    \begin{array}{l}
      b_1\leftarrow B_1\\
      \vdots\\
      b_k\leftarrow B_k
    \end{array}
    \right\}.
\end{align*} The proof that every rule in $(KP)R$ is in $K(PR)$ is analogous.

We now want to prove \prettyref{equ:KP_cup_R}. For any proper Krom rule $r=a\leftarrow b$, we have
\begin{align*} 
    \{a\leftarrow b\}\circ (P\cup R)&=\{a\leftarrow B\mid b\leftarrow B\in P\cup R, B\subseteq A\}\\
    &=\{a\leftarrow B\mid b\leftarrow B\in P\}\cup\{a\leftarrow B\mid b\leftarrow B\in R\}\\
    &=(\{a\leftarrow b\}\circ P)\cup (\{a\leftarrow b\}\circ R).
\end{align*} Hence, we have
\begin{align*} 
    K(P\cup R)&\stackrel{\prettyref{equ:facts_proper}}= f(K)\cup  p(K)(P\cup R)\\
    &\stackrel{\prettyref{equ:bigcup}}= f(K)\cup\bigcup_{r\in  p(K)}\{r\}(P\cup R)\\
    &= f(K)\cup\bigcup_{r\in  p(K)}(\{r\}P\cup\{r\}R)\\
    &= f(K)\cup\bigcup_{r\in  p(K)}\{r\}P\cup\bigcup_{r\in  p(K)}\{r\}R\\
    &= f(K)\cup  p(K)P\cup  p(K)R\\
    &=( f(K)\cup  p(K)P)\cup ( f(K)\cup  p(K)R)\\
    &=KP\cup KR
\end{align*} where the sixth identity follows from the idempotency of union. This shows that Krom programs distribute from the left which implies that the space of all proper\footnote{If $K$ contains facts, then $K\emptyset= f(K)\neq\emptyset $ violates the last axiom of a semiring (cf. Section \ref{sec:Algebraic_Structures}).} Krom programs forms a semiring (cf. \prettyref{thm:Horn}). 
\end{proof}

\subsubsection{Interpretations}

Notice that, formally, interpretations are Krom programs containing only rules with empty bodies called facts, which gives interpretations a special compositional meaning.

\begin{fact}\label{fact:IP=I} Every interpretation $I$ is a left zero, that is, for any program $P$, we have
\begin{align}\label{equ:IP=I} 
    IP=I.
\end{align} Consequently, the space of interpretations forms a right ideal.\footnote{See \prettyref{cor:I_ideal}.}
\end{fact}

\begin{corollary} A program $P$ commutes with an interpretation $I$ iff $I$ is a supported model of $P$, that is,
\begin{align*} 
    PI=IP \quad\Leftrightarrow\quad I\in Supp(P).
\end{align*}
\end{corollary}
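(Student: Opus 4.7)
The plan is to reduce the biconditional to the identity $PI = T_P(I)$. Proposition \ref{prop:IP=I} already yields $IP = I$ for any interpretation $I$, so $PI = IP$ collapses to the equation $PI = I$; and by definition $I$ is a supported model\ioff $T_P(I) = I$. Hence the entire corollary reduces to proving the single identity $PI = T_P(I)$.

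To establish that identity I would unfold the definition of composition directly. By definition, $PI$ consists of the rules $head(r) \la body(S)$ where $r \in P$, $S \seq_r I$, and $head(S) = body(r)$. Since $I$ is an interpretation, every $s \in I$ is a fact with $body(s) = \emptyset$, so $body(S) = \emptyset$ for every $S \seq I$ and every composed rule turns out to be a fact. Moreover, each fact in $I$ is uniquely determined by its (singleton) head, so the joint conditions $S \seq_r I$ and $head(S) = body(r)$ collapse to the single requirement $body(r) \seq I$, that is, $I \models body(r)$; when this holds, $S$ is forced to be the set of facts in $I$ whose heads are exactly the atoms of $body(r)$. Identifying facts with the atoms they assert, this yields
\[
PI = \{\, head(r) \mid r \in P,\ I \models body(r)\,\} = T_P(I),
\]
which is essentially the content of the forthcoming Theorem \ref{thm:T_P} and could alternatively just be cited from there.

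Chaining the two identifications then gives $PI = IP$ iff $PI = I$ iff $T_P(I) = I$ iff $I \in Supp(P)$, which is what is claimed. The only delicate point is the size-and-bijection bookkeeping in the middle step: one must verify that the size constraint implicit in the notation $S \seq_r I$, together with the set equation $head(S) = body(r)$, genuinely reduces to $body(r) \seq I$. In the present propositional setting this is immediate, because distinct facts in $I$ have distinct heads and the body of a rule is treated as a set of distinct atoms; but it is the step most susceptible to a miscount, and it is also the step that would have to be revisited in the first-order generalizations envisaged in the introduction.
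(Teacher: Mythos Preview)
Your proof is correct and follows exactly the paper's route: the paper's own argument is the one-line remark that the corollary is ``a direct consequence of Proposition \ref{prop:IP=I} and the forthcoming Theorem \ref{thm:T_P}'', which is precisely your chain $PI = IP \Leftrightarrow PI = I \Leftrightarrow T_P(I) = I$. The only difference is that you additionally spell out a direct verification of $PI = T_P(I)$ by unfolding the definition of composition, whereas the paper defers that identity to Theorem \ref{thm:T_P} (and proves it there via the reduct calculus, $T_P(I)=head(P^I)=P^IA=P1^IA=PI$, rather than by the elementwise argument you give); both derivations are valid, and your closing remark that one could simply cite Theorem \ref{thm:T_P} matches the paper exactly.
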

\begin{proof} A direct consequence of \prettyref{fact:IP=I} and the forthcoming \prettyref{thm:T_P} (which is not depending on this result).
\end{proof}

\subsubsection{Permutations}\label{sec:Permutations}

With every permutation $\pi:A\to A$, we associate the propositional Krom program
\begin{align*} 
    \pi=\{\pi(a)\leftarrow a\mid a\in A\}.
\end{align*} We adopt here the standard cycle notation for permutations. For instance, we have
\begin{align*} 
    \pi_{(a\,b)}= \left\{
    \begin{array}{l}
      a\leftarrow b\\
      b\leftarrow a
    \end{array}
    \right\} \quad\text{and}\quad\pi_{(a\,b\,c)}= \left\{
    \begin{array}{l}
      a\leftarrow b\\
      b\leftarrow c\\
      c\leftarrow a
    \end{array}
    \right\}.
\end{align*} The inverse $\pi^{-1}$ of $\pi$ translates into the language of programs as
\begin{align*} 
    \pi^{-1}=\pi^  d.
\end{align*} Interestingly, we can rename the atoms occurring in a program via permutations and composition by
\begin{align*} 
    \pi\circ P\circ \pi^  d=\{\pi( h(r))\leftarrow\pi( b(r))\mid r\in P\}.
\end{align*}

We have the following structural result as a direct instance of the more general result for permutations:

\begin{fact} The space of all permutation programs forms a group.
\end{fact}

\subsection{Reducts}\label{sec:Reducts}

Reducing a program to a restricted alphabet is a fundamental operation on programs which we will often use in the rest of the paper, especially in \prettyref{sec:Decomposition}. For example, in \prettyref{cor:P^h(R)_circ_^b(R)R} we will see that the computation of composition can be simplified using reducts. This motivates the following definition.

\begin{definition} We define the {\em left} and {\em right reduct}\footnote{Our notion of right reduct coincides with the FLP reduct \cite{Faber04} well-known in answer set programming.} of a program $P$, with respect to some interpretation $I$, respectively by
\begin{align*} 
    ^IP:=\{r\in P\mid I\models  h(r)\} \quad\text{and}\quad P^I:=\{r\in P\mid I\models  b(r)\}.
\end{align*}
\end{definition}

Our first observation is that we can compute the facts of $P$ via the right reduct with respect to the empty set, that is, we have
\begin{align}\label{equ:f(P)} 
     f(P)=P^\emptyset =P\emptyset .
\end{align} On the contrary, computing the left reduct with respect to the empty set yields
\begin{align}\label{equ:^0P} 
    ^\emptyset P=\emptyset .
\end{align} Moreover, for any interpretations $I$ and $J$, we have
\begin{align}\label{equ:^JI} 
    ^JI=I\cap J \quad\text{and}\quad I^J=I.
\end{align}

Notice that we obtain the reduction of $P$ to the atoms in $I$, denoted $P|_I$, by
\begin{align}\label{equ:P_I} 
    P|_I={^I}(P^I)=(^IP)^I.
\end{align} As the order of computing left and right reducts is irrelevant, in the sequel we will omit the parentheses in \prettyref{equ:P_I}. Of course, we have
\begin{align*} 
    ^AP=P^A={^A}P^A=P.
\end{align*} Moreover, we have
\begin{align}
    \label{equ:1^I} 1^I&={^I}1={^I}1^I=1_I=1|_I\\
    \label{equ:1I_circ_J}1^I\circ 1^J&=1^{I\cap J}=1^J\circ 1^I=1^I\cap 1^J\\
    \label{equ:1^I_cup_J}1^I\cup 1^J&=1^{I\cup J}.
\end{align}

We now want to relate reducts to composition and union.


\begin{proposition}\label{prop:^IP} For any program $P$ and interpretation $I$, we have
\begin{align}\label{equ:^IP} 
    ^IP=1^I\circ P \quad\text{and}\quad P^I=P\circ 1^I.
\end{align} Moreover, we have
\begin{align}\label{equ:P_I2} 
    P|_I=1^I\circ P\circ 1^I.
\end{align} Consequently, for any program $R$, we have
\begin{align} 
    \label{equ:^IPR}^I(P\cup R)={^I}P\cup{^I}R& \quad\text{and}\quad{^I}(PR)={^I}PR\\
    \label{equ:^IP_cup_R}(P\cup R)^I=P^I\cup R^I& \quad\text{and}\quad (PR)^I=PR^I.
\end{align}
\end{proposition}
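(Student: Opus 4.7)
\medskip
\noindent\textbf{Proof plan.} The plan is to reduce the entire proposition to the two identities of \ref{equ:^IP}; everything else then follows mechanically from associativity of composition (Theorem \ref{thm:Horn}) and the distributivity properties already established.

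For $1^I\circ P = {^I}P$, I would unpack the definition of composition with $1^I=\{a\la a\mid a\in I\}$ on the left. Every rule of $1^I$ has body-size exactly one, so the admissible subsets $S$ in the definition of composition are singletons $\{s\}\seq P$, and the matching condition $head(S)=body(r)=\{a\}$ forces $head(s)=a$ with $a\in I$. The produced rule is $a\la body(s)=head(s)\la body(s)=s$, giving $1^I\circ P=\{s\in P\mid head(s)\in I\}={^I}P$. For $P\circ 1^I=P^I$, a rule $r\in P$ requires an $|r|$-subset $S\seq 1^I$ with $head(S)=body(r)$; since distinct rules of $1^I$ have distinct heads, such an $S$ exists \ioff $body(r)\seq I$, in which case $S=\{a\la a\mid a\in body(r)\}$ is the unique choice, $body(S)=body(r)$, and the produced rule is $r$ itself. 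Hence $P\circ 1^I=P^I$.

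With \ref{equ:^IP} in hand, the remaining identities are formal. The equality $P|_I=1^I\circ P\circ 1^I$ follows by applying both cases and then invoking associativity; $^I(P\circ R)={^I}P\circ R$ and $(P\circ R)^I=P\circ R^I$ are immediate associativity rewrites; $(P\cup R)^I=P^I\cup R^I$ uses the right-distributivity \ref{equ:PQ_cup_R}; and $^I(P\cup R)={^I}P\cup{^I}R$ uses the left-distributivity of Krom theories \ref{equ:KP_cup_R}, which applies because $1^I$ is Krom. The only place that demands care is the combinatorial book-keeping for $P\circ 1^I$ (checking that the forced $S$ really is an $|r|$-subset, using distinctness of the heads in $1^I$, and that no other admissible $S$ contributes); this is routine rather than a conceptual obstacle.
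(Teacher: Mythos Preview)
Your proposal is correct and follows essentially the same route as the paper: both establish the two identities in \eqref{equ:^IP} by unfolding the definition of composition (singletons on the left for $1^I\circ P$, the unique $|r|$-subset of tautologies on the right for $P\circ 1^I$), and then derive \eqref{equ:P_I2}--\eqref{equ:^IP_cup_R} formally from associativity, right-distributivity \eqref{equ:PQ_cup_R}, and the Krom left-distributivity \eqref{equ:KP_cup_R} applied to $1^I$. Your explicit remark that distinct rules of $1^I$ have distinct heads, forcing uniqueness and existence of $S$ exactly when $body(r)\subseteq I$, is the same combinatorial point the paper is using, just spelled out a bit more carefully.
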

\begin{proof} We compute
\begin{align*} 
    1^I\circ P&=\{ h(r)\leftarrow  b(s)\mid r\in 1^I,s\in P, h(s)= b(r)\}\\
    &=\{ h(r)\leftarrow  b(s)\mid r\in 1,s\in P, h(s)= b(r), h(r)= b(r)\in I\}\\
    &=\{ h(s)\leftarrow  b(s)\mid s\in P, h(s)\in I\}\\
    &=\{s\in P\mid I\models  h(s)\}\\
    &={^I}P.
\end{align*} Similarly, we compute
\begin{align}\label{equ:P_circ_1} 
    P\circ 1^I&=\{ h(r)\leftarrow  b(S)\mid r\in P,S\subseteq_{sz(r)} 1^I, h(S)= b(r)\}\\
    &=\left\{ h(r)\leftarrow  b(S) \;\middle|\; 
    \begin{array}{l}
        r\in P,S\subseteq_{sz(r)} 1,\\
         h(S)= b(r),\\
         h(S)= b(S)\subseteq I
    \end{array}\right\}.
\end{align} By definition of $1$, we have $ b(S)= h(S)$ and therefore $ b(S)= b(r)$ and $ b(r)\subseteq I$. Hence, \prettyref{equ:P_circ_1} is equivalent to
\begin{align*} 
    \{ h(r)\leftarrow  b(r)\mid r\in P:I\models  b(r)\}=P^I.
\end{align*}
Finally, we have
\begin{align*} 
    ^I(P\cup R)\stackrel{\prettyref{equ:^IP}}=1^I(P\cup R)\stackrel{\prettyref{equ:KP_cup_R}}=1^IP\cup 1^IR\stackrel{\prettyref{equ:^IP}}={^I}P\cup{^I}R
\end{align*} and
\begin{align*} 
    ^I(PR)\stackrel{\prettyref{equ:^IP}}=(1^I)(PR)\stackrel{\prettyref{equ:KPR}}=((1^I)P)R\stackrel{\prettyref{equ:^IP}}={^IP}R
\end{align*} with the remaining identities in \prettyref{equ:^IP_cup_R} holding by analogous computations.
\end{proof}

Consequently, we can simplify the computation of composition as follows.

\begin{corollary}\label{cor:P^h(R)_circ_^b(R)R} For any programs $P$ and $R$, we have
\begin{align}\label{equ:P^head_circ_^body_R}
    P\circ R=P^{ h(R)}\circ{^{ b(P)}}R.
\end{align}
\end{corollary}

\subsection{Proper Programs}

By equation \prettyref{equ:f(P)}, we can extract the facts of a program $P$ by computing the right reduct with respect to the empty set, $P^\emptyset$, and by composing $P$ with the empty set, $P\emptyset$. Unfortunately, there is no analogous characterization of the proper rules in terms of composition. We have therefore introduced a unary operator $ p$ for that purpose and we now want to state some basic properties of that operator as it is used to characterize idempotent programs (\prettyref{prop:idempotent}). The proper rules operator satisfies the following identities, for any propositional logic programs $P$ and $R$, and interpretation $I$:
\begin{align} 
     p\circ  p&= p\\
    \label{equ:pf}  p\circ  f&= f\circ  p=\emptyset \\
     p(1)&=1\\
     p(I)&=\emptyset \\
     p(P)\emptyset &\stackrel{\prettyref{equ:f(P)}}= f( p(P))\stackrel{\prettyref{equ:pf}}=\emptyset \\
    \label{equ:proper_cup}  p(P\cup R)&= p(P)\cup  p(R).
\end{align} Of course, we have $ p(P)=P$ iff $P$ contains no facts, that is, iff $P^\emptyset =\emptyset $. The last identity says that the proper rules operator is compatible with union; however, the following counter-example shows that it is {\em not} compatible with sequential composition:
\begin{align*} 
     p\left(\{a\leftarrow b,c\}\circ \left\{
    \begin{array}{l}
      b\leftarrow b\\
      c
    \end{array}
    \right\}\right)=\{a\leftarrow b\}
    \end{align*} whereas
    \begin{align*}  p(\{a\leftarrow b,c\})\circ  p\left(\left\{
    \begin{array}{l}
      b\leftarrow b\\
      c
    \end{array}
    \right\}\right)=\emptyset.
\end{align*} This shows that the proper rule operator fails to be a homomorphism with respect to composition.

\begin{proposition} The space of all proper propositional logic programs forms a submagma of the space of all propositional logic programs with the zero given by the empty set.
\end{proposition}
\begin{proof} The space of proper programs is closed under composition. It remains to show that the empty set is a zero, but this follows from the fact that it is a left zero by \prettyref{equ:IP=I} and from the observation that for any proper program $P$, we have $P\emptyset\stackrel{\prettyref{equ:f(P)}}= f(P)=\emptyset .$
\end{proof}

\subsection{Idempotent Programs}

Recall that $P$ is called {\em idempotent} if $P\circ P=P$. Idempotent programs have the nice property that their least models are trivial to compute as they coincide with the facts of the program. Characterizing idempotent programs on an algebraic level is therefore interesting, which is done in the next result in terms of their facts and proper rules.


\begin{proposition}\label{prop:idempotent} A program $P$ is idempotent iff
\begin{align}\label{equ:P^2} 
     p(P) f(P)\subseteq  f(P) \quad\text{and}\quad  p(P^2)= p( p(P)P).
\end{align}
\end{proposition}
\begin{proof} The program $P$ is idempotent iff we have (i) $ f(P^2)= f(P)$ and (ii) $ p(P^2)= p(P)$. For the first condition in \prettyref{equ:P^2}, we compute
\begin{align*} 
     f(P^2)
        &\stackrel{\prettyref{equ:f(P)}}=P^2\emptyset\\
        &=P(P\emptyset)\\ 
        &=P\circ  f(P)\\
        &\stackrel{\prettyref{equ:P}}=( f(P)\cup  p(P)) f(P)\\
        &\stackrel{\prettyref{equ:(P_cup_Q)_circ_R}}= f(P) f(P)\cup  p(P) f(P)\\
        &\stackrel{\prettyref{equ:IP=I}}= f(P)\cup  p(P) f(P).
\end{align*} This yields
\begin{align*} 
     f(P^2)= f(P) \quad\Leftrightarrow\quad  p(P) f(P)\subseteq  f(P).
\end{align*} For the second condition in \prettyref{equ:P^2}, we compute
\begin{align*} 
     p(P^2)
        &= p(( f(P)\cup  p(P))P)\\
        &\stackrel{\prettyref{equ:(P_cup_Q)_circ_R}}= p( f(P)P\cup  p(P)P)\\
        &\stackrel{\prettyref{equ:IP=I}}= p( f(P)\cup  p(P)P)\\
        &\stackrel{\prettyref{equ:proper_cup}}= p( f(P))\cup  p( p(P)P)\\
        &\stackrel{\prettyref{equ:pf}}= p( p(P)P).
\end{align*}
\end{proof}

\begin{fact}\label{fact:I_idempotent} Every interpretation is idempotent.
\end{fact}
\begin{proof} A direct consequence of \prettyref{equ:IP=I}.
\end{proof}





\section{Decomposition}\label{sec:Decomposition}

It is often useful to know how a program is assembled from its simpler parts. In this section, we therefore study sequential decompositions of acyclic (\prettyref{sec:Acyclic_Programs}) and arbitrary programs (\prettyref{sec:General_Decompositions}). Before we do that we first look in \prettyref{sec:Adding_} at some ways how to algebraically transform programs via composition.


\subsection{Adding and Removing Body Atoms}\label{sec:Adding_}

Notice that we can manipulate rule bodies via composition on the right. For example, we have
\begin{align*} 
    \{a\leftarrow b,c\}\circ \left\{
    \begin{array}{l}
      b\leftarrow b\\
      c
    \end{array}
    \right\}=\{a\leftarrow b\}.
\end{align*} The general construction here is that we add a tautological rule $b\leftarrow b$ for every body atom $b$ of $P$ which we want to preserve, and we add a fact $c$ in case we want to remove $c$ from the rule bodies in $P$. We therefore define, for an interpretation $I$, the minimalist program
\begin{align*} 
    I^\ominus=1^{A-I}\cup I.
\end{align*} Notice that $.^\ominus$ is computed with respect to some fixed alphabet $A$. For instance, we have
\begin{align*} 
    A^\ominus=A \quad\text{and}\quad\emptyset ^\ominus=1.
\end{align*} The first equation yields another explanation for \prettyref{equ:head_P}, that is, we can compute the heads in $P$ by removing all body atoms of $P$ via $$ h(P)=PA^\ominus=PA.$$ Interestingly enough, we have
\begin{align*} 
    I^\ominus I=(1^{A-I}\cup I)I\stackrel{\prettyref{equ:(P_cup_Q)_circ_R}}=1^{A-I}I\cup I^2\stackrel{\prettyref{equ:IP=I},\prettyref{equ:^JI},\prettyref{equ:^IP}}=((A-I)\cap I)\cup I=I
\end{align*} and
\begin{align*} 
    I^\ominus P={^{A-I}}P\cup I.
\end{align*} Moreover, in the example above, we have
\begin{align*} 
    \{c\}^\ominus= \left\{
    \begin{array}{l}
      a\leftarrow a\\
      b\leftarrow b\\
      c
    \end{array}
    \right\} \quad\text{and}\quad\{a\leftarrow b,c\}\circ\{c\}^\ominus=\{a\leftarrow b\}
\end{align*} as desired. Notice also that the facts of a program are, of course, not affected by composition on the right, that is, we cannot expect to remove facts via composition on the right (cf. \prettyref{equ:facts_PR}).

We have the following general result:

\begin{proposition}\label{prop:ominus} For any $P$ and interpretation $I$, we have
\begin{align*} 
    PI^\ominus=\{ h(r)\leftarrow ( b(r)-I)\mid r\in P\}.
\end{align*}
\end{proposition}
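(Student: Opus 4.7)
The plan is to reduce the proposition to a rule-by-rule computation using \ref{equ:bigcup}, that is,
\begin{align*}
P\circ I^\ominus = \bigcup_{r\in P}\{r\}\circ I^\ominus,
\end{align*}
and then show that for every single rule $r\in P$ one has $\{r\}\circ I^\ominus=\{head(r)\la (body(r)-I)\}$. Together with the obvious fact that $\{head(r)\la (body(r)-I)\mid r\in P\}=\bigcup_{r\in P}\{head(r)\la (body(r)-I)\}$, this yields the claim.

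The key structural property of $I^\ominus=1^{A-I}\cup I$ I would exploit is that for every atom $a\in A$, there is \emph{exactly one} rule in $I^\ominus$ with head $a$: either the tautology $a\la a$ coming from $1^{A-I}$ (if $a\notin I$), or the fact $a$ coming from $I$ (if $a\in I$); these two cases are mutually exclusive because $A=I\sqcup(A-I)$. This uniqueness is what makes the composition computable in closed form.

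Now I would unfold, for $r$ a proper rule of size $k$,
\begin{align*}
\{r\}\circ I^\ominus=\left\{head(r)\la body(S)\midd S\seq_r I^\ominus:\ head(S)=body(r)\right\}.
\end{align*}
Since $S$ must be a $k$-subset of $I^\ominus$ whose heads form $body(r)$, and since each atom of $body(r)$ has a unique matching rule in $I^\ominus$, there is a unique such $S$, obtained by collecting, for each $a\in body(r)$, its matching rule. The body contributed by such a matching rule is $\{a\}$ if $a\in A-I$ (tautology case) and $\0$ if $a\in I$ (fact case). Hence
\begin{align*}
body(S)=\bigcup_{a\in body(r)\cap(A-I)}\{a\}\ \cup\ \bigcup_{a\in body(r)\cap I}\0=body(r)-I,
\end{align*}
which gives $\{r\}\circ I^\ominus=\{head(r)\la(body(r)-I)\}$, as desired.

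Finally, I would observe that the formula also holds for facts $r\in facts(P)$: here $size(r)=0$, so the only admissible $S$ is $S=\0$ with $head(\0)=\0=body(r)$ and $body(\0)=\0=body(r)-I$, so $\{r\}\circ I^\ominus=\{r\}$, consistently with the formula (and with \ref{equ:facts_PR}). The main obstacle is simply the bookkeeping that justifies the uniqueness of $S$ and the cardinality condition $|S|=size(r)$; once one notes that heads in $I^\ominus$ are pairwise distinct atoms, the argument becomes routine.
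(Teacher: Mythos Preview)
Your argument is correct and follows the natural line suggested by the paper's informal discussion immediately preceding the proposition (namely, that each atom of $A$ has exactly one matching rule in $I^\ominus$: the tautology if the atom lies outside $I$, the fact if it lies in $I$). In fact, the paper states Proposition~\ref{prop:ominus} \emph{without proof}, so there is nothing to compare against beyond that informal motivation; your write-up supplies precisely the missing details. The only minor caveat is the cardinality bookkeeping you yourself flag: the uniqueness of $S$ and the equality $|S|=size(r)$ rely on the tacit convention that rule bodies contain no repeated atoms (so that $|body(r)|=size(r)$), which the paper assumes throughout by treating bodies as sets.
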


In analogy to the above construction, we can add body atoms via composition on the right. For example, we have
\begin{align*} 
    \{a\leftarrow b\}\circ\{b\leftarrow b,c\}=\{a\leftarrow b,c\}.
\end{align*} Here, the general construction is as follows. For an interpretation $I$, define the minimalist program
\begin{align*} 
    I^\oplus=\{a\leftarrow(\{a\}\cup I)\mid a\in A\}.
\end{align*} For instance, we have
\begin{align*} 
    A^\oplus=\{a\leftarrow A\mid a\in A\} \quad\text{and}\quad\emptyset ^\oplus=1.
\end{align*} Interestingly enough, we have
\begin{align*} 
    I^\oplus I^\ominus=I^\ominus \quad\text{and}\quad I^\oplus I=I.
\end{align*} Moreover, in the example above, we have
\begin{align*} 
    \{c\}^\oplus= \left\{
    \begin{array}{l}
      a\leftarrow a,c\\
      b\leftarrow b,c\\
      c\leftarrow c
    \end{array}
    \right\} \quad\text{and}\quad\{a\leftarrow b\}\circ\{c\}^\oplus=\{a\leftarrow b,c\}
\end{align*} as desired. As composition on the right does not affect the facts of a program, we cannot expect to append body atoms to facts via composition on the right. However, we can add arbitrary atoms to {\em all} rule bodies simultaneously and in analogy to \prettyref{prop:ominus}, we have the following general result:

\begin{proposition}\label{prop:oplus} For any program $P$ and interpretation $I$, we have
\begin{align*} 
    PI^\oplus= f(P)\cup\{ h(r)\leftarrow ( b(r)\cup I)\mid r\in  p(P)\}.
\end{align*}
\end{proposition}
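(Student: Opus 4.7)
My plan is to reduce the statement to a one-rule calculation via right-distributivity, and then use a uniqueness-of-heads property of $I^\oplus$ to pin down the relevant subset in the definition of composition.

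First I would split $P$ using \ref{equ:P} and apply right-distributivity \ref{equ:PQ_cup_R} to obtain
\begin{align*}
P I^\oplus = facts(P)\, I^\oplus \cup proper(P)\, I^\oplus.
\end{align*}
Since facts are interpretations, Proposition \ref{prop:IP=I} collapses the first summand to $facts(P)$. Then using \ref{equ:bigcup}, I rewrite $proper(P)\, I^\oplus = \bigcup_{r \in proper(P)} \{r\} \circ I^\oplus$, so the entire claim reduces to showing that for every proper rule $r \in P$,
\begin{align*}
\{r\} \circ I^\oplus = \{head(r) \la (body(r) \cup I)\}.
\end{align*}

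The key observation for this one-rule identity is that the theory $I^\oplus = \{a \la (\{a\}\cup I) \mid a \in A\}$ contains exactly one rule with head $a$, for each $a \in A$. Hence, for a proper rule $r$ with $body(r) = \{a_1, \ldots, a_k\}$, the condition $S \seq_r I^\oplus$ with $head(S) = body(r)$ forces $S$ to be the unique set $\{a_i \la (\{a_i\}\cup I) \mid 1 \leq i \leq k\}$. Then $body(S) = \bigcup_{i=1}^{k}(\{a_i\} \cup I) = body(r) \cup I$, which is nonempty since $r$ is proper, yielding the single rule $head(r) \la (body(r)\cup I)$ as claimed.

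There is no real obstacle here beyond bookkeeping; the only subtlety is that the argument genuinely requires $r$ to be proper, because if $r$ were a fact then $body(r) = \0$ and the corresponding $S$ would be empty, yielding $body(S) = \0$ rather than $I$. This is precisely why the facts have to be peeled off first via \ref{equ:IP=I} and handled separately, and it explains why the statement of the proposition treats facts and proper rules asymmetrically. Assembling the pieces gives the displayed equation.
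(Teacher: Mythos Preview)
Your argument is correct. The paper states Proposition~\ref{prop:oplus} without proof, treating it as a routine verification in analogy with Proposition~\ref{prop:ominus}; your decomposition via \ref{equ:P}, \ref{equ:PQ_cup_R}, \ref{equ:IP=I}, and \ref{equ:bigcup}, followed by the single-rule calculation exploiting the uniqueness of heads in $I^\oplus$, is exactly the natural way to fill in the details.
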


We now want to illustrate the interplay between the above concepts with an example.

\begin{example} Consider the propositional logic programs
\begin{align*} 
    P= \left\{
    \begin{array}{l}
      c\\
      a\leftarrow b,c\\
      b\leftarrow a,c
    \end{array}
    \right\} \quad\text{and}\quad \pi_{(a\,b)}= \left\{
    \begin{array}{l}
      a\leftarrow b\\
      b\leftarrow a
    \end{array}
    \right\}.
\end{align*} Roughly, we obtain $P$ from $\pi_{(a\,b)}$ by adding the fact $c$ to $\pi_{(a\,b)}$ and to each body rule in $\pi_{(a\,b)}$. Conversely, we obtain $\pi_{(a\,b)}$ from $P$ by removing the fact $c$ from $P$ and by removing the body atom $c$ from each rule in $P$. This can be formalized as\footnote{Here, we have $\{c\}^\ast=1\cup\{c\}$ and $\{c\}^\ast\pi_{(a\,b)}=\pi_{(a\,b)}\cup\{c\}$ by the forthcoming equation \prettyref{equ:I^ast}.}
\begin{align*} 
    P=\{c\}^\ast\pi_{(a\,b)}\{c\}^\oplus \quad\text{and}\quad\pi_{(a\,b)}=1^{\{a,b\}}P\{c\}^\ominus.
\end{align*}
\end{example}

\begin{remark}\label{rem:add_r} It is important to emphasize that we cannot add {\em arbitrary} rules to a program via composition. For example, suppose we want to add the rule $r=a\leftarrow b$ to the empty program $\emptyset$. By \prettyref{equ:IP=I} we have $\emptyset P=\emptyset$, for all $P$, which means that the only reasonable thing we can do is to compute $P\emptyset$---but $P\emptyset\stackrel{\prettyref{equ:f(P)}}= f(P)\neq\{r\}$ shows that we cannot add $r$ to the empty program via composition on the left. The intuitive reason is that in order to construct the rule $r$ via composition, we need to be able to add body atoms to facts, which is easily seen to be impossible.
\end{remark}


\subsection{Closures}

The following construction will often be useful when studying decompositions:

\begin{definition} Define the {\em closure} of $P$ with respect to some alphabet $A$ by
\begin{align}\label{equ:cl} 
     c_A(P):=1_A\cup P.
\end{align}
\end{definition}

Roughly, we obtain the closure of a program by adding all possible tautological rules of the form $a\leftarrow a$, for $a\in A$. Of course, the closure operator preserves semantic equivalence, that is,
\begin{align*} 
    P\equiv  c_A(P),\quad\text{for all alphabets $A$.}
\end{align*}

\begin{lemma}\label{lem:PQ} For any programs $P$ and $R$, if $P$ does not depend on $R$, we have
\begin{align}\label{equ:PQ} 
    P(Q\cup R)=PQ.
\end{align}
\end{lemma}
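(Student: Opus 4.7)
The plan is to reduce to the single-rule case via \ref{equ:bigcup}, writing
\begin{align*}
P(Q\cup R)=\bigcup_{r\in P}\{r\}(Q\cup R).
\end{align*}
It then suffices to show $\{r\}(Q\cup R)=\{r\}Q$ for each $r\in P$, since taking the union over $r$ immediately yields $P(Q\cup R)=PQ$.

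Fixing $r\in P$, by definition a typical rule of $\{r\}(Q\cup R)$ has the form $head(r)\la body(S)$ for some witness $S\seq_r(Q\cup R)$ with $head(S)=body(r)$. The key observation is that the hypothesis $body(P)\cap head(R)=\0$, together with $body(r)\seq body(P)$, forces $body(r)\cap head(R)=\0$. Consequently, for any such $S$ and any $s\in S$, we have $head(s)\in head(S)=body(r)$, so $head(s)\notin head(R)$, which rules out $s\in R$ and forces $S\seq Q$. Conversely, any witness $S\seq_r Q$ with $head(S)=body(r)$ is automatically a witness inside $Q\cup R$ via $Q\seq Q\cup R$. So the two collections of witnesses coincide, and $\{r\}(Q\cup R)=\{r\}Q$, as required.

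The only edge case worth flagging is when $r$ is a fact: then $body(r)=\0$, the unique valid witness is $S=\0$ regardless of which theory one composes with, and the identity holds trivially. I do not expect a real obstacle: right-distributivity via \ref{equ:bigcup} has already reduced the problem to the single-rule level, and the disjointness hypothesis is precisely what is needed to avoid the pathology responsible for the failure of left-distributivity in the counter-example following \ref{equ:PQ_cup_R}, since there the offending match was a head in $R$ that was simultaneously a body atom of $P$.
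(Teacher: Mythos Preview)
Your argument is correct. You reduce to the single-rule case via \ref{equ:bigcup} and then argue directly from the definition of composition: since $body(r)\cap head(R)=\0$ for every $r\in P$, no rule of $R$ can ever participate in a witness $S$ with $head(S)=body(r)$, so all admissible witnesses already lie in $Q$. The edge case of facts is handled correctly, and the size constraint in $S\seq_r(Q\cup R)$ is unaffected by the passage from $Q\cup R$ to $Q$.

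The paper takes a different, more algebraic route. It invokes the identity $P\circ R=P^{head(R)}\circ{^{body(P)}}R$ from \ref{equ:PR}, then applies the compatibility of reducts with union (Proposition~\ref{prop:^IP}) to write
\[
P(Q\cup R)=P^{head(Q)\cup head(R)}\circ\bigl({^{body(P)}}Q\cup{^{body(P)}}R\bigr),
\]
observes that the hypothesis forces ${^{body(P)}}R=\0$ and $P^{head(Q)\cup head(R)}=P^{head(Q)}$, and concludes by another application of \ref{equ:PR}. Your proof is shorter and entirely self-contained, needing only the definition of $\circ$ and the decomposition \ref{equ:bigcup}; the paper's proof, by contrast, exercises the reduct calculus developed in Section~\ref{sec:Trans} and thereby illustrates how those tools are meant to be used. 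Both approaches ultimately exploit the same fact---that no head from $R$ can match a body atom of $P$---but yours does so combinatorially at the level of witnesses, while the paper does so symbolically at the level of reducts.
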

\begin{proof} We compute
\begin{align}\label{equ:PQ1}
    P(Q\cup R)
        \stackrel{\prettyref{equ:P^head_circ_^body_R}}=P^{ h(Q\cup R)}\left({^{ b(P)}}(Q\cup R)\right)
        \stackrel{\prettyref{equ:^IPR}}=P^{ h(Q\cup R)}\left({^{ b(P)}}Q\cup{^{ b(P)}}R\right).
\end{align} Since $P$ does not depend on $R$, we have $ b(P)\cap  h(R)=\emptyset $ and, hence,
\begin{align*} 
    ^{ b(P)}R=\{r\in R\mid  b(P)\models  h(r)\}=\{r\in R\mid  h(r)\in  b(P)\}=\emptyset .
\end{align*} Moreover, we have
\begin{align*} 
    P^{ h(Q\cup R)}
        =P^{ h(Q)\cup  h(R)}
        =\{r\in P\mid  b(r)\subseteq  h(Q)\cup  h(R)\}
        =\{r\in P\mid  b(r)\subseteq  h(Q)\}
        =P^{ h(Q)}.
\end{align*} Hence, \prettyref{equ:PQ1} is equivalent to
\begin{align*} 
    \left(P^{ h(Q)}\right)\left({^{ b(P)}}Q\right)\stackrel{\prettyref{equ:P^head_circ_^body_R}}=PQ.
\end{align*}
\end{proof}

\begin{lemma}\label{lem:P_cup_R} For any programs $P$ and $R$, in case $P$ does not depend on $R$, we have
\begin{align}\label{equ:P_cup_R} 
    P\cup R= c_{ h(R)}(P) c_{ b(P)}(R) \quad\text{and}\quad  c_A(P\cup R)= c_A(P) c_A(R).
\end{align} Moreover, for any alphabets $A$ and $B$, we have
\begin{align*} 
     c_A( c_B(P))= c_{A\cup B}(P).
\end{align*}
\end{lemma}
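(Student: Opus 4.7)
The plan is to reduce every equation in the statement to an expansion of closures, $cl_X(Q) = 1_X \cup Q$, followed by right-distributivity of composition (which always holds by \ref{equ:PQ_cup_R}), left-distributivity restricted to Krom factors (Theorem \ref{thm:Krom}, applicable because each $1_X$ is a Krom unit theory), and the just-proved Lemma \ref{lem:PQ}. The basic algebraic toolkit is $1^I \circ 1^J = 1^{I \cap J}$ from \ref{equ:1I_circ_J}, $1^I \cup 1^J = 1^{I \cup J}$ from \ref{equ:1^I_cup_J}, and $1^I \circ P = {^I}P$ from \ref{equ:^IP}.

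For the first identity, I would expand $cl_{head(R)}(P)\, cl_{body(P)}(R) = (1_{head(R)} \cup P)(1_{body(P)} \cup R)$ and apply right-distributivity. The first summand is $1_{head(R)}(1_{body(P)} \cup R)$, which by Krom left-distributivity equals $1_{head(R)} \cdot 1_{body(P)} \cup 1_{head(R)} \cdot R$; the first term collapses to $1_{head(R) \cap body(P)} = 1_\emptyset = \emptyset$ by the non-dependence hypothesis combined with \ref{equ:1I_circ_J}, while the second equals ${^{head(R)}}R = R$ since every rule of $R$ has its head in $head(R)$. The second summand is $P(1_{body(P)} \cup R)$, which by Lemma \ref{lem:PQ} (with $Q = 1_{body(P)}$ and the hypothesis that $P$ does not depend on $R$) reduces to $P \cdot 1_{body(P)} = P^{body(P)} = P$. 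The three surviving pieces assemble to $P \cup R$.

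For the second identity, under the same non-dependence hypothesis, I would perform the analogous expansion of $(1_A \cup P)(1_A \cup R)$. Right-distributivity yields $1_A(1_A \cup R) \cup P(1_A \cup R)$; Krom left-distributivity applied to $1_A$ reduces the first summand to $1_A \cdot 1_A \cup 1_A \cdot R = 1_A \cup R$, and Lemma \ref{lem:PQ} reduces the second to $P \cdot 1_A = P$. Together these give $1_A \cup P \cup R = cl_A(P \cup R)$. The third identity is then immediate from the definition of closure, associativity of union, and $1_A \cup 1_B = 1_{A \cup B}$ via \ref{equ:1^I_cup_J}.

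The only subtle point — and the one I expect to be the main source of confusion — is in the second identity: one might worry that Lemma \ref{lem:PQ} is unusable here because $body(P) \cap head(1_A) = body(P)$ in general, so $P$ certainly depends on $1_A$. However, Lemma \ref{lem:PQ} is asymmetric: it only requires that $P$ not depend on the second summand (here $R$), not on $Q = 1_A$. Once this is noted, the computation goes through verbatim.
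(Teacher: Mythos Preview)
Your proposal is correct and follows essentially the same route as the paper: expand the closures, apply right-distributivity \ref{equ:PQ_cup_R} and Krom left-distributivity \ref{equ:KP_cup_R}, collapse the unit products via \ref{equ:1I_circ_J} and \ref{equ:^IP}, and invoke Lemma \ref{lem:PQ} to eliminate the $R$-summand inside $P(\cdot)$. The only cosmetic difference is that for the second identity the paper uses the unit property $1_A\cdot(1_A\cup R)=1_A\cup R$ directly rather than first splitting via Krom left-distributivity, but this is the same computation; your explicit remark on the asymmetry of Lemma \ref{lem:PQ} is a useful clarification that the paper leaves implicit.
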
\begin{proof} We compute
\begin{align*} 
     c_{ h(R)}(P) c_{ b(P)}(R)&=(1_{ h(R)}\cup P)(1_{ b(P)}\cup R)\\
    &\stackrel{\prettyref{equ:(P_cup_Q)_circ_R}}=1_{ h(R)}(1_{ b(P)}\cup R)\cup P(1_{ b(P)}\cup R)\\
    &\stackrel{\prettyref{equ:KP_cup_R}}=1_{ h(R)}1_{ b(P)}\cup 1_{ h(R)}R\cup P(1_{ b(P)}\cup R)\\
    &\stackrel{\prettyref{equ:1I_circ_J},\prettyref{equ:^IP}}=1_{ h(R)\cap  b(P)}\cup{^{ h(R)}}R\cup P(1_{ b(P)}\cup R)\\
    &=R\cup P(1_{ b(P)}\cup R)\\
    &\stackrel{\prettyref{equ:PQ}}=R\cup P1_{ b(P)}\\
    &\stackrel{\prettyref{equ:^IP}}=R\cup P^{ b(P)}\\
    &=R\cup P
\end{align*} where the fourth equality follows from $1_{ h(R)\cap  b(P)}=\emptyset $ and $^{ h(R)}R=R$, the fifth equality holds since $P$ does not depend on $R$ (\prettyref{lem:PQ}), and the last equality follows from $P^{ b(P)}=P$.

For the second equality, we compute
\begin{align*} 
     c_A(P) c_A(R)=(1\cup P)(1\cup R)\stackrel{\prettyref{equ:(P_cup_Q)_circ_R}}=1\cup R\cup P(1\cup R)\stackrel{\prettyref{equ:PQ}}=1\cup R\cup P= c_A(P\cup R)
\end{align*} where the third identity follows from the fact that $P$ does not depend on $R$ (\prettyref{lem:PQ}).

Lastly, we compute
\begin{align*} 
     c_A( c_B(P))=1_A\cup 1_B\cup B\stackrel{\prettyref{equ:1^I_cup_J}}=1_{A\cup B}\cup P= c_{A\cup B}(P).
\end{align*}
\end{proof}


\subsection{Acyclic Programs}\label{sec:Acyclic_Programs}

Recall that a program is called {\em acyclic} if it has a level mapping so that the head of each rule is strictly greater than each of its body atom with respect to that mapping. Acyclic programs appear frequently in practice \cite{Apt91}, which is the motivation for studying their decompositions in this section. The main result of this section shows that acyclic programs can be decomposed into single-rule programs (\prettyref{thm:acyclic}).

\begin{example}\label{exa:Ean} We define a family of acyclic programs over $A$, which we call {\em elevator programs}, as follows: given a sequence $(a_1,\ldots,a_n)\in A^n$, $1\leq n\leq |A|$, of distinct atoms, let $E_{(a_1,\ldots,a_n)}$ be the acyclic program
\begin{align*} 
    E_{(a_1,\ldots,a_n)}=\{a_1\}\cup\{a_i\leftarrow a_{i-1}\mid 2\leq i\leq n\}.
\end{align*} So, for instance, $E_{(a,b,c)}$ is the program
\begin{align*} 
    E_{(a,b,c)}= \left\{
    \begin{array}{l}
      a\\
      b\leftarrow a\\
      c\leftarrow b
    \end{array}
    \right\}.
\end{align*} The mapping $\ell$ given by $\ell(a)=1$, $\ell(b)=2$, and $\ell(c)=3$ is a level mapping for $E_{(a,b,c)}$ and we can decompose $E_{(a,b,c)}$ into a product of single-rule programs by
\begin{align*} 
    E_{(a,b,c)}&=(1^{\{b,c\}}\cup\{a\})(1^{\{c\}}\cup \{b\leftarrow a\})(1^{\{a\}}\cup \{c\leftarrow b\})\\
    &= c_{\{b,c\}}(\{a\}) c_{\{c\}}(\{b\leftarrow a\}) c_{\{c\}}(\{c\leftarrow b\}).
\end{align*} 
\end{example}

We now want to generalize the reasoning pattern of \prettyref{exa:Ean} to arbitrary acyclic programs. For this, we will need the following lemma.

\begin{lemma} For any programs $P$ and $R$ and alphabet $B$, if $P$ and $1_B$ do not depend on $R$, we have
\begin{align}\label{equ:c_B} 
     c_B( c_{ h(R)}(P) c_{ b(P)}(R))= c_{B\cup  h(R)}(P) c_{B\cup  b(P)}(R).
\end{align} 
\end{lemma}
\begin{proof} Since $P$ does not depend on $R$, as a consequence of \prettyref{lem:P_cup_R} we have
\begin{align*} 
     c_{ h(R)}(P) c_{ b(P)}=P\cup R.
\end{align*} Hence,
\begin{align*} 
     c_B&( c_{ h(R)}(P) c_{ b(P)}(R))\\
    &= c_B(P\cup R)\\
    &=1_B\cup P\cup R\\
    &\stackrel{\prettyref{equ:P_cup_R}}= c_{ h(R)}(1_B\cup P) c_{ b(1_B\cup P)}(R)\qquad ( h(R)\cap B=\emptyset )\\
    &= c_{B\cup  h(R)}(P) c_{B\cup  b(P)}(R).
\end{align*}
\end{proof}

We further will need the following auxiliary construction. Define, for any linearly ordered rules $r_1<\ldots<r_n$, $n\geq 2$,
\begin{align*} 
    bh_i(\{r_1<\ldots<r_n\}):= b(\{r_1,\ldots,r_{i-1}\})\cup  h(\{r_{i+1},\ldots,r_n\}).
\end{align*}

We are now ready to prove the following decomposition result for acyclic programs.

\begin{theorem}\label{thm:acyclic} We can sequentially decompose any acyclic program $P=\{r_1<_\ell\ldots<_\ell r_n\}$, $n\geq 2$, linearly ordered by a level mapping $\ell$, into single-rule programs as
\begin{align*} 
    P=\prod_{i=1}^n  c_{bh_i(P)}(\{r_i\}).
\end{align*} This decomposition is unique up to reordering of rules within a single level.\footnote{See the construction of the total ordering $<_\ell$ in Section \ref{sec:Syntax}.}
\end{theorem}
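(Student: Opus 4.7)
The plan is induction on $n$. The base case $n=1$ is immediate: $bh_1(P)=\emptyset$, so $cl_\emptyset(\{r_1\})=\{r_1\}=P$.

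For the inductive step with $n\ge 2$, let $P''=\{r_2<_\ell\ldots<_\ell r_n\}$. Since $r_1$ is first in the total order, for every $j\ge 2$ we have $\ell(body(r_1))<\ell(head(r_1))\le\ell(head(r_j))$, so $body(r_1)\cap head(P'')=\emptyset$; that is, $\{r_1\}$ does not depend on $P''$. Lemma \ref{lem:P_cup_R} then yields
\begin{align*}
P=\{r_1\}\cup P''=cl_{head(P'')}(\{r_1\})\cdot cl_{body(r_1)}(P''),
\end{align*}
and the first factor is already $cl_{bh_1(P)}(\{r_1\})$ since $head(P'')=bh_1(P)$.

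To decompose the second factor I would establish, by a parallel induction on $m$, the auxiliary claim that for any acyclic theory $Q=\{s_1<_\ell\ldots<_\ell s_m\}$ and any alphabet $B$ with $B\cap head(Q)=\emptyset$,
\begin{align*}
cl_B(Q)=\prod_{j=1}^{m}cl_{B\cup bh_j(Q)}(\{s_j\}).
\end{align*}
In the inductive step one splits $Q=\{s_1\}\cup Q''$; the disjointness assumption together with acyclicity ensures that both $\{s_1\}$ and $1_B$ fail to depend on $Q''$. Combining Lemma \ref{lem:P_cup_R} with equation \ref{equ:cl_B} then gives $cl_B(Q)=cl_{B\cup head(Q'')}(\{s_1\})\cdot cl_{B\cup body(s_1)}(Q'')$, whose first factor equals $cl_{B\cup bh_1(Q)}(\{s_1\})$; applying the inductive hypothesis to $Q''$ with $B'=B\cup body(s_1)$ (disjointness from $head(Q'')$ is preserved by acyclicity) expands the second factor to $\prod_{j=2}^{m}cl_{B\cup bh_j(Q)}(\{s_j\})$, via the identity $B'\cup bh_{j-1}(Q'')=B\cup bh_j(Q)$.

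Applying this auxiliary claim to $Q=P''$ with $B=body(r_1)$ finishes the main induction, since $body(r_1)\cup bh_{j-1}(P'')=bh_j(P)$ for $j\ge 2$. The technical hurdle is threading the non-dependence hypotheses through every application of Lemma \ref{lem:P_cup_R} and equation \ref{equ:cl_B}: in each case disjointness reduces to the level-ordering property of $<_\ell$, but the bookkeeping of how $bh_j$ transforms under the split $P=\{r_1\}\cup P''$ is what makes the closures re-assemble into the claimed product.
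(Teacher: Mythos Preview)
Your argument is correct but follows a different path from the paper. The paper proves the identity by induction on $n$ with base case $n=2$, peeling off the \emph{last} rule $r_{n+1}$ from the right of the product; it then manipulates the product of the first $n$ factors directly, using idempotency of $\cup$ to extract the common term $1_{head(r_{n+1})}$ from each factor, applying Lemma~\ref{lem:PQ} to drop it, and only then invoking the induction hypothesis on $\{r_1,\ldots,r_n\}$. In contrast you peel off the \emph{first} rule $r_1$ via Lemma~\ref{lem:P_cup_R}, and then handle the remaining factor $cl_{body(r_1)}(P'')$ by a strengthened induction hypothesis---your auxiliary claim for $cl_B(Q)$---which packages the accumulating body-atoms into the parameter $B$ and lets you invoke equation~\eqref{equ:cl_B} structurally at each step. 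Your route is arguably cleaner: the strengthened hypothesis absorbs exactly the bookkeeping that the paper performs by hand with idempotency, and in fact the auxiliary claim with $B=\emptyset$ already \emph{is} the theorem (since $cl_\emptyset(P)=P$), so your outer ``main induction'' could be dropped entirely. The paper's route, on the other hand, avoids stating and proving a separate lemma and stays closer to the raw product expression. Neither proof addresses the uniqueness clause, which in the paper is left as an informal remark about the ambiguity of $<_\ell$ within a level.
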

\begin{proof} The proof is by induction on the number $n$ of rules in $P$. For the induction hypothesis $n=2$ and $P=\{r_1<_\ell r_2\}$, we proceed as follows. First, we have
\begin{align}\label{equ:acyclic1} 
     c_{bh_1(P)}(\{r_1\}) c_{bh_2(P)}(\{r_2\})&=(1_{ h(\{r_2\})}\cup\{r_1\})(1_{ b(\{r_1\})}\cup\{r_2\})\\
    &=1_{ h(\{r_2\})}(1_{ b(\{r_1\})}\cup\{r_2\})\cup\{r_1\}(1_{ b(\{r_1\})}\cup\{r_2\}).
\end{align} Since Krom programs distribute from the left (\prettyref{thm:Krom}), we can simplify \prettyref{equ:acyclic1}, by applying \prettyref{equ:^IP} and \prettyref{equ:1^I_cup_J}, into
\begin{align}\label{equ:acyclic2} 
    1_{ h(\{r_2\})\cap  b(\{r_1\})}\cup{^{ h(\{r_2\})}}\{r_2\}\cup\{r_1\}^{ b(\{r_1\})}=1_{ h(\{r_2\})\cap  b(\{r_1\})}\cup\{r_1,r_2\}.
\end{align} Now, since $r_1$ does not depend on $r_2$, that is, $ h(r_2)\cap  b(r_1)=\emptyset $, the first term in \prettyref{equ:acyclic2} equals $1_\emptyset =\emptyset $ which implies that \prettyref{equ:acyclic2} is equivalent to $P$ as desired.

For the induction step $P=\{r_1<_\ell\ldots<_\ell r_{n+1}\}$, we proceed as follows. First, by definition of $bh_i$, we have
\begin{align} 
    \prod_{i=1}^{n+1}( c_{bh_i(P)}(\{r_i\}))
        &=\prod_{i=1}^{n+1}(1_{bh_i(P)}\cup\{r_i\})\\
        \label{equ:Ab_1}&=\left[\prod_{i=1}^n(1_{bh_i(\{r_1,\ldots,r_n\})}\cup 1_{ h(r_{n+1})}\cup\{r_i\})\right](1_{ b(\{r_1,\ldots,r_n\})}\cup\{r_{n+1}\}).
\end{align} Second, by idempotency of union we can extract the term $$1_{ h(r_{n+1})}=1_{ h(r_{n+1})}\ldots 1_{ h(r_{n+1})}\quad\text{($n$ times)}$$ occurring in \prettyref{equ:Ab_1} thus obtaining
\begin{align}\label{equ:Ab_2}
    \left[1_{ h(r_{n+1})}\cup\prod_{i=1}^n(1_{bh_i(\{r_1,\ldots,r_n\})}\cup 1_{ h(r_{n+1})}\cup\{r_i\})\right](1_{ b(\{r_1,\ldots,r_n\})}\cup\{r_{n+1}\}).
\end{align} Now, since $r_i$ and $1_{bh_i(\{r_1,\ldots,r_n\})}$ do not depend on $r_{n+1}$, we can simplify \prettyref{equ:Ab_2} further to
\begin{align}\label{equ:Ab_3} 
    \left[1_{ h(r_{n+1})}\cup\prod_{i=1}^n(1_{bh_i(\{r_1,\ldots,r_n\})}\cup\{r_i\})\right](1_{ b(\{r_1,\ldots,r_n\})}\cup\{r_{n+1}\}).
\end{align} By applying the induction hypothesis to \prettyref{equ:Ab_3}, we obtain
\begin{align*} 
    (1_{ h(r_{n+1})}\cup\{r_1,\ldots,r_n\})(1_{ b(\{r_1,\ldots,r_n\})}\cup\{r_{n+1}\})
\end{align*} which, by right-distributivity of composition, is equal to
\begin{align}\label{equ:Ab_4}  
    1_{ h(r_{n+1})}(1_{ b(\{r_1,\ldots,r_n\})}\cup\{r_{n+1}\})\cup\{r_1,\ldots,r_n\}(1_{ b(\{r_1,\ldots,r_n\})}\cup\{r_{n+1}\}).
\end{align} Now again since $r_i$ does not depend on $r_{n+1}$, for all $1\leq i\leq n$, as a consequence of \prettyref{equ:KP_cup_R}, \prettyref{equ:^IP}, \prettyref{equ:1^I_cup_J}, and \prettyref{equ:PQ}, the term in \prettyref{equ:Ab_4} equals
\begin{align}\label{equ:acyclic_final} 
    1_{ h(r_{n+1})\cap  b(\{r_1,\ldots,r_n\})}\cup{^{ h(r_{n+1})}}\{r_{n+1}\}\cup\{r_1,\ldots,r_n\}^{ b(\{r_1,\ldots,r_n\})}.
\end{align} Finally, since $1_{ h(r_{n+1})\cap  b(\{r_1,\ldots,r_n\})}=1_\emptyset =\emptyset $, \prettyref{equ:acyclic_final} equals $\{r_1,\ldots,r_{n+1}\}$, which proves our theorem.
\end{proof}

\subsection{General Decompositions}\label{sec:General_Decompositions}

In the last subsection, we studied the decomposition of acyclic programs and showed that they are assembled of single-rule programs. In this subsection, we study decompositions of arbitrary programs (\prettyref{thm:P_cup_R}) which turns out to be much more difficult given that we now have to deal with circularities among the rules (\prettyref{problem:decomposition}).

We first wish to generalize \prettyref{lem:P_cup_R} to arbitrary propositional logic programs. For this, we define, for every interpretation $I$ and disjoint copy $I'=\{a'\mid a\in I\}$ of $I$, the minimalist program (recall that $A$ is the underlying propositional alphabet)
\begin{align*} 
    [I\leftarrow I']
        &:= c_{A-I}(\{a\leftarrow a'\mid a\in I\})\\
        &\;=\{a\leftarrow a'\mid a\in I\}\cup\{a\leftarrow a\mid a\in A-I\}.
\end{align*} We have $$[A\leftarrow A']=\{a\leftarrow a'\mid a\in A\}$$ and therefore
\begin{align*} 
    P[A\leftarrow A']=\{ h(r)\leftarrow  b(r)'\mid r\in P\},
\end{align*} where $ b(r)'=\{b'\mid b\in  b(r)\}$, and
\begin{align*} 
    [A'\leftarrow A]P=\{ h(r)'\leftarrow  b(r)\mid r\in P\}.
\end{align*} Moreover, we have
\begin{align}\label{equ:A_A'} 
    [A\leftarrow A'][A'\leftarrow A]=1_A.
\end{align}



We are now ready to prove the main result of this subsection which shows that we can represent the union of programs via composition and closures:

\begin{theorem}\label{thm:P_cup_R} For any programs $P$ and $R$, we have 
\begin{align*} 
    P\cup R= c_{ h(R)}(P[A\leftarrow A']) c_{ b(P[A\leftarrow A'])}(R) c_A([A'\leftarrow A]).
\end{align*}
\end{theorem}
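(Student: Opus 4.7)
The plan is to reduce the theorem to Lemma~\ref{lem:P_cup_R} by first renaming the body atoms of $P$ to decouple it from $R$, and then inverting the renaming. Working over the extended alphabet $A\cup A'$, one first checks that
$$P[A\la A'] = facts(P)\cup\{head(r)\la body(r)'\mid r\in proper(P)\},$$
so $body(proper(P[A\la A']))\seq A'$. Since $head(R)\seq A$ and $A\cap A'=\0$, the theory $P[A\la A']$ does not depend on $R$, and Lemma~\ref{lem:P_cup_R} yields
$$P[A\la A']\cup R = cl_{head(R)}(P[A\la A'])\circ cl_{body(P[A\la A'])}(R).$$

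Next I would post-compose both sides on the right by $cl_A([A'\la A])$. The right-hand side then matches the expression claimed by the theorem verbatim, so it remains to verify $(P[A\la A']\cup R)\circ cl_A([A'\la A]) = P\cup R$. By right-distributivity \ref{equ:PQ_cup_R}, this splits as
$$\bigl(P[A\la A']\circ cl_A([A'\la A])\bigr)\cup\bigl(R\circ cl_A([A'\la A])\bigr).$$
Expanding $cl_A([A'\la A]) = 1_A\cup[A'\la A]$ and distributing once more, the central step is that by associativity of composition together with identity \ref{equ:A_A'},
$$P[A\la A']\circ[A'\la A] = P\circ\bigl([A\la A']\circ[A'\la A]\bigr) = P\circ 1_A = P.$$
The remaining three sub-products are routine: $P[A\la A']\circ 1_A = facts(P)$, because the primed body atoms in the proper part of $P[A\la A']$ find no matching head in $1_A$, while facts are preserved by right-composition (\ref{equ:facts_PR}); $R\circ 1_A = R$ since $R$ is over $A$; and $R\circ[A'\la A] = R$ because the only rules of $[A'\la A]$ with unprimed heads are the tautologies $\{a\la a\mid a\in A\}\seq[A'\la A]$, which identically resolve every body atom of $R$, while facts of $R$ survive as left zeros. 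Combining the sub-products gives $P[A\la A']\circ cl_A([A'\la A]) = P\cup facts(P) = P$ and $R\circ cl_A([A'\la A]) = R$, hence the desired $P\cup R$.

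The main obstacle is the disciplined bookkeeping of the two disjoint alphabets $A$ and $A'$: one must verify that the proper bodies of $P[A\la A']$ lie entirely in $A'$, so that the non-dependency hypothesis of Lemma~\ref{lem:P_cup_R} is satisfied, and that $[A'\la A]$ simultaneously plays two distinct roles—inverting the renaming on the proper part of $P[A\la A']$ via its primed-head rules $a'\la a$, while acting as the identity on $R$ via its tautological rules $a\la a$ on $A$.
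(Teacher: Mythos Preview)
Your overall strategy coincides with the paper's: apply Lemma~\ref{lem:P_cup_R} to $P[A\la A']$ and $R$, then post-compose by $cl_A([A'\la A])$ and show that this collapses to $P\cup R$. The execution of the last step, however, has two genuine gaps.

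First, after writing $cl_A([A'\la A])=1_A\cup[A'\la A]$ you ``distribute once more,'' i.e.\ you split $X\circ(1_A\cup[A'\la A])$ into $X\circ 1_A\cup X\circ[A'\la A]$ for $X\in\{P[A\la A'],R\}$. But left-distributivity fails in this near-semiring (the paper gives an explicit counterexample just after~\ref{equ:PQ_cup_R}), so this step requires justification you do not supply. It does happen to hold here---for $P[A\la A']$ because all proper bodies lie in $A'$ and hence can only match heads from $[A'\la A]$, and for $R$ because all bodies lie in $A$ and hence can only match heads from $1_A$---but that is precisely the content of Lemma~\ref{lem:PQ}, which the paper invokes instead of distributing.

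Second, your computation $R\circ[A'\la A]=R$ is incorrect. By the paper's convention $[A'\la A]=\{a'\la a\mid a\in A\}$ has \emph{only} primed heads; it does not contain the tautologies $a\la a$ over $A$ that you claim. Consequently no proper rule of $R$ (whose body lies in $A$) finds a matching head, and one actually gets $R\circ[A'\la A]=facts(R)$. Your final union $R\cup facts(R)=R$ is still correct, but only by accident.

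The paper sidesteps both issues by applying Lemma~\ref{lem:PQ} directly: $P[A\la A']$ does not depend on $1_A$, so $P[A\la A'](1_A\cup[A'\la A])=P[A\la A']\,[A'\la A]=P$ by~\ref{equ:A_A'}; and $R$ does not depend on $[A'\la A]$, so $R(1_A\cup[A'\la A])=R\circ 1_A=R$.
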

\begin{proof} Since $P[A\leftarrow A']$ does not depend on $R$, we have, as a consequence of \prettyref{lem:P_cup_R},
\begin{align*} 
    P[A\leftarrow A']\cup R= c_{ h(R)}(P[A\leftarrow A']) c_{ b(P[A\leftarrow A'])}(R).
\end{align*} Finally, we have
\begin{align*} 
    (P[A\leftarrow A']\cup R) c_A([A'\leftarrow A])&\stackrel{\prettyref{equ:(P_cup_Q)_circ_R}}=P[A\leftarrow A'] c_A([A'\leftarrow A])\cup R\, c_A([A'\leftarrow A])\\
    &=P[A\leftarrow A'](1_A\cup [A'\leftarrow A])\cup R(1_A\cup [A'\leftarrow A])\\
    &=P[A\leftarrow A'][A'\leftarrow A]\cup R\\
    &\stackrel{\prettyref{equ:A_A'}}=P\cup R
\end{align*} where the third identity follows from the fact that $P[A\leftarrow A']$ does not depend on $1_A$ and $R$ does not depend on $[A'\leftarrow A]$ (apply \prettyref{lem:PQ}).
\end{proof}

\begin{example} Let $A=\{a,b,c\}$ and consider the program $P=R\cup\pi_{(b\,c)}$ where
\begin{align*} 
    R= \left\{
    \begin{array}{l}
      a\leftarrow b,c\\
      a\leftarrow a,b\\
      b\leftarrow a
    \end{array}
    \right\} \quad\text{and}\quad\pi_{(b\,c)}= \left\{
    \begin{array}{l}
      b\leftarrow c\\
      c\leftarrow b
    \end{array}
    \right\}.
\end{align*} We wish to decompose $P$ into a product of $R$ and $\pi_{(b\,c)}$ according to \prettyref{thm:P_cup_R}. For this, we first have to replace the body of $R$ with a distinct copy of its atoms, that is, we compute
\begin{align*} 
    R[A\leftarrow A']= \left\{
    \begin{array}{l}
      a\leftarrow b',c'\\
      a\leftarrow a',b'\\
      b\leftarrow a'
    \end{array}
    \right\}.
\end{align*} Notice that $R[A\leftarrow A']$ no longer depends on $\pi_{(b\,c)}$! Next, we compute the composition of
\begin{align*} 
     c_{ h(\pi_{(b\,c)})}(R[A\leftarrow A'])= \left\{
    \begin{array}{l}
      a\leftarrow b',c'\\
      a\leftarrow a',b'\\
      b\leftarrow a'\\
      b\leftarrow b\\
      c\leftarrow c
    \end{array}
    \right\}
\end{align*} and
\begin{align*} 
     c_{ b(R[A\leftarrow A'])}(\pi_{(b\,c)})= \left\{
    \begin{array}{l}
      b\leftarrow c\\
      c\leftarrow b\\
      a'\leftarrow a'\\
      b'\leftarrow b'\\
      c'\leftarrow c'
    \end{array}
    \right\}
\end{align*} by
\begin{align*} 
    R'= c_{ h(\pi_{(b\,c)})}(R[A\leftarrow A']) c_{ b(R[A\leftarrow A'])}(\pi_{(b\,c)})= \left\{
    \begin{array}{l}
      a\leftarrow b',c'\\
      a\leftarrow a',b'\\
      b\leftarrow a'\\
      b\leftarrow c\\
      c\leftarrow b
    \end{array}
    \right\}.
\end{align*} Finally, we need to replace the atoms from $A'$ in the body of $R'$ by atoms from $A$, that is, we compute
\begin{align*} 
    R' c_A([A'\leftarrow A])= \left\{
    \begin{array}{l}
      a\leftarrow b',c'\\
      a\leftarrow a',b'\\
      b\leftarrow a'\\
      b\leftarrow c\\
      c\leftarrow b
    \end{array}
    \right\}\circ \left\{
    \begin{array}{l}
      a\leftarrow a\\
      b\leftarrow b\\
      c\leftarrow c\\
      a'\leftarrow a\\
      b'\leftarrow b\\
      c'\leftarrow c
    \end{array}
    \right\}=P
\end{align*} as expected.
\end{example}

\begin{problem}\label{problem:decomposition} Unfortunately, at the moment we have no such nice decomposition result for arbitrary programs as for acyclic programs (\prettyref{thm:acyclic}), not relying on auxiliary atoms in $A'$, which remains as future work (cf. \prettyref{sec:Future_Work}).
\end{problem}

\section{Algebraic Semantics}\label{sec:Algebraic_Semantics}

Recall that the fixed point semantics of a propositional logic program is defined in terms of least fixed point computations of its associated van Emden-Kowalski immediate consequence operator (\prettyref{prop:LM}). That is, the semantics is defined {\em operationally} outside the syntactic space of the programs. In this section, we reformulate the fixed point semantics of programs in terms of sequential composition by first showing that we can express the van Emden-Kowalski operators via composition (\prettyref{thm:T_P}), and then showing how programs (not operators) can be iterated bottom-up to obtain their least models (\prettyref{thm:LM}). This bridges the conceptual gap between the syntax and semantics of a program.

\subsection{The van Emden-Kowalski Operator}\label{sec:T_P}

The next results show that we can simulate the van Emden-Kowalski operators on a {\em syntactic} level without any explicit reference to operators.

\begin{theorem}\label{thm:T_P} For any program $P$ and interpretation $I$, we have
\begin{align}\label{equ:T_P} 
    T_P(I)=PI.
\end{align} Moreover, we have for any program $R$,
\begin{align}\label{equ:T_PR}
    T_P\cup T_R=T_{P\cup R} \quad\text{and}\quad T_P\circ T_R=T_{PR} \quad\text{and}\quad T_\emptyset =\emptyset  \quad\text{and}\quad T_1=Id_{2^A}.
\end{align} 
\end{theorem}
\begin{proof} 
All of the equations follow immediately from the definition of composition. To illustrate some of the already derived results, we compute
\begin{align*} 
    T_P(I)&= h(P^I)\stackrel{\prettyref{equ:head_P}}=P^IA\stackrel{\prettyref{equ:^IP}}=(P(1^I))A=P((1^I)A)\stackrel{\prettyref{equ:^IP}}=P(^IA)\stackrel{\prettyref{equ:^JI}}=P(I\cap A)=PI
\end{align*} where the last equality follows from $I$ being a subset of $A$. 

\end{proof}


\begin{corollary}\label{cor:assoc_ss} Sequential composition is associative modulo subsumption equivalence, that is,
\begin{align*} 
    P(QR)\equiv_{ss} (PQ)R\quad\text{for any programs $P,Q,R$}.
\end{align*} This implies
\begin{align}\label{equ:PRI} 
    (PR)I=P(RI),\quad\text{for any programs $P,R$ and interpretation $I$}.
\end{align}
\end{corollary}
\begin{proof} Associativity modulo subsumption equivalence is an immediate consequence of \prettyref{thm:T_P} which shows
\begin{align*} 
    T_{P(QR)}=T_{(PQ)R} \quad\Rightarrow\quad P(QR)\equiv_{ss} (PQ)R.
\end{align*} The identity is shown via \prettyref{equ:T_PR} by
\begin{align*} 
    (PR)I=T_{PR}(I)=T_P(T_R(I))=P(RI).
\end{align*}
\end{proof}

As a direct consequence of \prettyref{prop:prefixed} and \prettyref{thm:T_P}, we have the following algebraic characterization of (supported) models:

\begin{corollary} An interpretation $I$ is a model of $P$ iff $PI\subseteq I$, and $I$ is a supported model of $P$ iff $PI=I$.
\end{corollary}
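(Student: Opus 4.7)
The plan is to chain together two results already in hand: Proposition \ref{prop:prefixed}, which characterizes models semantically as prefixed points of the van Emden-Kowalski operator, and Theorem \ref{thm:T_P}, which re-expresses $T_P(I)$ syntactically as the composition $P\circ I$. Since the corollary is explicitly flagged as a direct consequence of these, the proof amounts to a one-line substitution in each of the two claims.

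For the first part, I would start from the equivalence $I\models P \Leftrightarrow T_P(I)\seq I$ given by Proposition \ref{prop:prefixed}. Using Theorem \ref{thm:T_P} to rewrite $T_P(I)$ as $PI$, this becomes $I\models P\Leftrightarrow PI\seq I$, which is exactly the first assertion. For the second part, I would invoke the definition of a supported model, namely that $I$ is a fixed point of $T_P$, i.e., $T_P(I)=I$. Again, substituting $PI$ for $T_P(I)$ via Theorem \ref{thm:T_P} yields the desired equivalence $I\in Supp(P)\Leftrightarrow PI=I$.

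There is no real obstacle here; the corollary is essentially a restatement. The only point to note is that the substitution is legitimate because Theorem \ref{thm:T_P} is an identity of interpretations (both sides lie in $2^A$), so set-theoretic inclusion and equality on $T_P(I)$ transfer directly to the composition $PI$. The entire proof can therefore be written as the two-step computation
\begin{align*}
  I\models P \yrel{prop:prefixed}{\Iff} T_P(I)\seq I \yrel{thm:T_P}{\Iff} PI\seq I,
\end{align*}
and analogously with $\seq$ replaced by $=$ for the supported-model statement.
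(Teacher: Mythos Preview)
Your proposal is correct and matches the paper's approach exactly: the corollary is stated there without a proof environment, merely as ``a direct consequence of Proposition \ref{prop:prefixed}'' (with Theorem \ref{thm:T_P} implicit from context). Your two-line chain $I\models P \Leftrightarrow T_P(I)\seq I \Leftrightarrow PI\seq I$ (and the analogous equality version for supported models) is precisely the intended substitution.
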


\begin{corollary}\label{cor:I_ideal} The space of all interpretations forms an ideal.
\end{corollary}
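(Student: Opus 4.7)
The plan is to verify both the left-ideal and right-ideal conditions separately and then invoke the definition of a two-sided ideal from Section \ref{sec:Structures}. Let $\mathcal{I}$ denote the collection of all interpretations, viewed as a subset of the monoid of propositional Horn theories under sequential composition.

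For the right-ideal condition, I would simply appeal to Proposition \ref{prop:IP=I}: given any interpretation $I \in \mathcal{I}$ and any theory $P$, we have $IP = I \in \mathcal{I}$, so $\mathcal{I}P \seq \mathcal{I}$. This is already noted in the paper as the right-ideal observation following Proposition \ref{prop:IP=I}.

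For the left-ideal condition, the key is to use Theorem \ref{thm:T_P}, which gives $PI = T_P(I)$ for any theory $P$ and interpretation $I$. By the very definition of the van Emden-Kowalski operator, $T_P(I) = \{head(r) \mid r\in P : I\models body(r)\}$ is a set of atoms of $A$, i.e., an interpretation. Therefore $PI \in \mathcal{I}$, which gives $P\mathcal{I} \seq \mathcal{I}$.

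Combining both inclusions, $\mathcal{I}$ is simultaneously a left and right ideal, hence a two-sided ideal. There is no real obstacle here — the only subtlety is that one should remember $PI$ is a priori a theory (built via the composition formula) but Theorem \ref{thm:T_P} identifies it with a set of atoms, so it automatically lives in $\mathcal{I}$.
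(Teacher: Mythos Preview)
Your proposal is correct and matches the paper's own proof essentially verbatim: the paper likewise cites Proposition \ref{prop:IP=I} for the right-ideal inclusion and Theorem \ref{thm:T_P} for the left-ideal inclusion, then concludes that interpretations form a two-sided ideal.
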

\begin{proof} By \prettyref{fact:IP=I}, we know that the space of interpretations forms a right ideal and \prettyref{thm:T_P} implies that it is a left ideal---hence, it forms an ideal.
\end{proof}



\subsection{Least Models}\label{sec:LM}

We interpret propositional logic programs according to their least model semantics and since least models can be constructively computed by bottom-up iterations of the associated van Emden-Kowalski operators (\prettyref{prop:LM}), and since these operators can be represented in terms of composition (\prettyref{thm:T_P}), we can finally reformulate the fixed point semantics of programs algebraically in terms of composition (\prettyref{thm:LM}). 

We make the convention that for any $n\geq 3$,
\begin{align*} 
    P^n=(((PP)P)\ldots P)P\quad (n\text{ times}).
\end{align*}

\begin{definition} Define the unary {\em Kleene star} and {\em plus operations} by
\begin{align}\label{equ:P^ast} 
    P^\ast=\bigcup_{n\geq 0} P^n \quad\text{and}\quad P^+=P^\ast P.
\end{align} Moreover, define the {\em omega operation} by
\begin{align*} 
    P^\omega=P^+\emptyset \stackrel{\prettyref{equ:f(P)}}= f(P^+).
\end{align*}
\end{definition}

Notice that the unions in \prettyref{equ:P^ast} are finite since $P$ is finite. For instance, for any interpretation $I$, we have as a consequence of \prettyref{fact:I_idempotent},
\begin{align}\label{equ:I^ast} 
    I^\ast=1\cup I \quad\text{and}\quad I^+=I \quad\text{and}\quad I^\omega=I.
\end{align} Interestingly enough, we can add the atoms in $I$ to $P$ via
\begin{align}\label{equ:P_cup_I} 
    P\cup I\stackrel{\prettyref{equ:IP=I}}=P\cup IP\stackrel{\prettyref{equ:(P_cup_Q)_circ_R}}=(1\cup I)P\stackrel{\prettyref{equ:I^ast}}=I^\ast P.
\end{align} Hence, as a consequence of \prettyref{equ:P} and \prettyref{equ:P_cup_I}, we can decompose $P$ as
\begin{align*} 
    P= f(P)^\ast  p(P),
\end{align*} which, roughly, says that we can sequentially separate the facts from the proper rules in $P$.

We are now ready to characterize the least model of a program via composition as follows.

\begin{theorem}\label{thm:LM} For any program $P$, we have
\begin{align*} 
    LM(P)=P^\omega.
\end{align*}
\end{theorem}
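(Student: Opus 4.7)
The plan is to combine the classical Kleene fixed-point iteration for $T_P$ with the syntactic-level identities already established in Theorems \ref{thm:T_PR} and \ref{thm:T_P}. First I would invoke Proposition \ref{prop:LM} to rewrite $LM(P) = lfp(T_P)$, and then use the fact that $T_P$ is a monotone operator on the complete lattice of interpretations (ordered by inclusion) to express the least fixed point as
\begin{align*} lfp(T_P) = \bigcup_{n\geq 0} T_P^n(\0).
\end{align*}

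The next step is to transport this bottom-up iteration of the operator $T_P$ into a bottom-up iteration of the syntactic composition on $P$. By Theorem \ref{thm:T_PR}, we have $T_P \circ T_R = T_{P\circ R}$, so a straightforward induction on $n$ yields $T_P^n = T_{P^n}$ (with $P^0 = 1$, the unit theory, corresponding to $T_1 = Id_A$). Applying Theorem \ref{thm:T_P} then gives $T_{P^n}(\0) = P^n \circ \0$, and equation \ref{equ:P^0} simplifies this to $facts(P^n)$.

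Putting these together, I obtain
\begin{align*} LM(P) = \bigcup_{n\geq 0} facts(P^n) = facts\!\left(\bigcup_{n\geq 0} P^n\right) = facts(P^\ast),
\end{align*}
using the (trivial) fact that $facts$ distributes over arbitrary unions, since a rule is a fact in the union exactly when it is a fact in some member. To finish, I observe that $P^0 = 1$ contains only tautologies and hence contributes no facts, so $facts(P^\ast) = facts(P^+)$, which by the definition of the omega operation is exactly $P^\omega$.

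I do not expect any serious obstacle: every ingredient has already been proved in the paper, and the only delicate point is the bookkeeping around $n = 0$ in the transition from $P^\ast$ to $P^+$, which is handled by noting that the unit theory has empty $facts$. The monotonicity of $T_P$, and hence the validity of the Kleene iteration formula for $lfp(T_P)$, is a standard fact about the van Emden--Kowalski operator that I would cite rather than reprove.
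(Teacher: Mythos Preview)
Your proposal is correct and is exactly the unpacking of the paper's own proof, which reads in full: ``A direct consequence of Proposition \ref{prop:LM} and Theorem \ref{thm:T_P}.'' You have simply made explicit the Kleene iteration $lfp(T_P)=\bigcup_{n\geq 0}T_P^n(\0)$, the identification $T_P^n(\0)=P^n\circ\0=facts(P^n)$, and the bookkeeping that $facts(1)=\0$, all of which the paper leaves to the reader.
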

\begin{proof} A direct consequence of \prettyref{prop:LM} and \prettyref{thm:T_P}.
\end{proof}

\begin{corollary}\label{cor:P^omega=R^omega} Two programs $P$ and $R$ are equivalent iff $P^\omega=R^\omega$.
\end{corollary}

To summarize, we have thus shown that we can use the syntactic sequential composition operation to capture the least model or fixed point semantics of a program algebraically {\em within} the space of programs giving rise to the algebraic characterization of the semantics of a program in \prettyref{thm:LM} and the semantic equivalence of programs in \prettyref{cor:P^omega=R^omega}.

\section{Related Work}\label{sec:RW}

\citeA{OKeefe85} is probably the first to study the composition of logic programs. He first defines the denotation of $P$ by\footnote{For an operator $T$, we define $T^\ast$ in the same way as $P^\ast$ in \prettyref{sec:LM} as $T^\ast=Id\cup T\cup T^2\cup\ldots$; notice that \citeA{OKeefe85} uses the notation $T^\omega$ instead of $T^\ast$ which we refuse to use here to remain consistent with the use of the symbols in \prettyref{sec:LM}.} $\llbracket P\rrbracket=(T_P\cup Id)^\ast$ and then defines the semantic composition of $P$ and $R$ so that $\llbracket P\circ R\rrbracket=\llbracket P\rrbracket\circ\llbracket R\rrbracket$ \cite<cf.>[p. 450]{Bugliesi94}. That is, he defines the composition of two programs on a {\em semantic} level in terms of the compositions of their immediate consequence operators which is straightforward given that we already know how to compose functions; in contrast, in our approach of this paper we define the sequential composition of two programs explicitly on a {\em syntactic} level which allows us to distinguish algebraically between subsumption equivalent programs (at the cost of losing associativity; see \prettyref{exa:non-associativity} and \prettyref{cor:assoc_ss}). 

It is important to emphasize that since we can represent the immediate consequence operator in terms of sequential composition (\prettyref{thm:T_P}), O'Keefe's definition can be resembled in terms of sequential composition as well via\footnote{Notice that the identity function $Id$ corresponds to the unit program $1$ since $Id(I)=1\circ I$ holds for all interpretations $I$.}
\begin{align*} 
    \llbracket P\circ R\rrbracket(I)
        &=(\llbracket P\rrbracket\circ\llbracket R\rrbracket)(I)\\
        &=((T_P\cup Id)^\ast\circ (T_R\cup Id)^\ast)(I)\\
        &=(T_P\cup Id)^\ast((T_R\cup Id)^\ast(I))\\
        &\stackrel{\prettyref{equ:T_P}}=(P\cup 1)^\ast\circ ((P\cup 1)^\ast\circ I)\\
        &\stackrel{\prettyref{equ:cl}}=  c(P)^\ast\circ (  c(R)^\ast\circ I)\quad\text{for all interpretations $I$.}
\end{align*} 

\citeA{Bugliesi94} study {\em modular logic programming} by defining algebraic operations on programs similar to \citeA{OKeefe85}. More formally, O'Keefe's composition is captured in \citeS{Bugliesi94} framework as $\llbracket P\circ R\rrbracket=\llbracket(P\cup R^\omega)^\omega\rrbracket$, where $\llbracket P^\omega\rrbracket=T_P^\infty$ in our notation \cite<cf.>[p. 450]{Bugliesi94}. This shows that we can represent \citeS{Bugliesi94} algebraic operations in terms of sequential composition as well. The work of \citeA{Brogi92} is similar in spirit to the aforementioned works. Formally, \citeA{Brogi92} first define the admissible Herbrand model of $P$ denoted (again) by $\llbracket P\rrbracket$ and then define a {\em semantic} composition operation $\bullet$ which allows one to compute the semantics of the union of two programs in the sense that $\llbracket P\cup R\rrbracket=\llbracket P\rrbracket\bullet\llbracket R\rrbracket$ \cite<cf.>[Proposition 3.4]{Brogi92}. 

\citeA{Dong90,Dong95} study the composition and decomposition of datalog program {\em mappings}---where datalog programs are function-free logic programs used mainly in database theory \cite<cf.>{Ceri89,Ceri90}---for query optimization in the context of databases. \citeA{Plambeck90,Plambeck90a} employs tools from semigroup theory for query optimization. His work is similar in spirit to the work of \citeA{Ioannidis91} who map Horn rules to an relational algebra operator, where the set of all such operators forms a semiring, and answer queries via solving linear equations. The authors then study decompositions of relational algebra operators for optimization purposes. To the best of our knowledge, these are the only works utilizing semigroups and semirings in logic programming.\footnote{There are semiring-based works on {\em weighted} logic programs, where semirings are used to add probabilities and other quantitative information to rules and programs \cite<e.g.>{Bistarelli01}.}


\citeA{Dix01} consider the following {\em principle of partial evaluation} as a semantic preserving transformation of a given program $P$. Let $r=a_0\leftarrow a_1,\ldots,a_k\in P$ and choose some fixed body atom $a$ among $a_1,\ldots,a_k$; without loss of generality, we can assume $a=a_1$. If all the rules in $P$ with $a_1$ in the head are given by
\begin{align}\label{equ:a_1} 
    a_1\leftarrow B_1 \quad\ldots\quad a_1\leftarrow B_n,
\end{align} then we can replace $r$ by the $n$ rules
\begin{align*} 
    a_0\leftarrow B_1\cup\{a_2,\ldots,a_k\} \quad\ldots\quad a_0\leftarrow B_n\cup\{a_2,\ldots,a_k\}.
\end{align*} There is some similarity between this transformation and the computation of $P^2=P\circ P$. Formally, if we add to $P$ the tautological rules
\begin{align*} 
    a_2\leftarrow a_2 \quad\ldots\quad a_k\leftarrow a_k,
\end{align*} then $P^2$ contains all the rules in \prettyref{equ:a_1}; however, $P^2$ may contain many more rules with $a_1$ in its head depending on the rules in $P$ with $a_2,\ldots,a_k$ as head atoms. So the main difference between the above principle and the sequential composition operation of this paper is that the former operates on {\em single} programs and on {\em single} body atoms whereas the latter is defined as a binary operation between {\em two} programs and operates on {\em all} body atoms at once.

The sequential composition and decomposition of (first-order) logic programs is used in the context of logic program synthesis via analogical proportions in \citeA{Antic23-23} and for defining a qualitative notion of syntactic program similarity, which allows one to answer queries across different domains, in \citeA{Antic21-4}. \citeA{Antic23-1} provides algebraic characterizations of least model and uniform equivalence of propositional Krom logic programs in terms of sequential composition. 

From a purely mathematical point of view, sequential composition of propositional logic programs as introduced in this paper is very similar to composition in multicategories \cite<cf.>[§2.1]{Leinster04}.

\section{Future Work}\label{sec:Future_Work}



In the future, we plan to extend the constructions and results of this paper to wider classes of logical formalisms, most importantly to first-order and higher-order logic programs \cite{Lloyd87,Apt90}, disjunctive programs \cite{Eiter97}, and non-monotonic logic programs under the stable model or answer set semantics \cite{Gelfond91} \cite<see>{Baral03,Brewka11,Eiter09,Lifschitz19}. The first task is non-trivial as function symbols require the use of most general unifiers in the definition of composition and give rise to infinite algebras, whereas the non-monotonic case is more difficult to handle algebraically due to negation as failure \cite{Clark78} occurring in rule bodies (and heads). In \citeA{Antic21-2}, the author studies the non-monotonic case in detail by lifting the concepts and results of this paper from the Horn to the general case containing negation as failure. Some of the results of this paper are subsumed by more general results in \citeA{Antic21-2}. However, lifting the results on decompositions of programs of \prettyref{sec:Decomposition} to programs with negation turns out to be highly non-trivial and therefore remains an open problem. Even more problematic, disjunctive rules yield non-deterministic behavior which is more difficult to handle algebraically. Nonetheless, we expect interesting results in all of the aforementioned cases to follow.


Another major line of research is to study sequential {\em decompositions} of programs as was initiated in \prettyref{problem:decomposition}. Specifically, we wish to compute decompositions of arbitrary propositional logic programs (and extensions thereof) into simpler programs in the vein of \prettyref{thm:acyclic}, where we expect permutation programs (\prettyref{sec:Permutations}) to play a fundamental role in such decompositions. For this, it will be necessary to resolve the issue of defining the notion of a ``prime'' or indecomposable program. From a practical point of view, a mathematically satisfactory theory of program decompositions is relevant to modular logic programming and optimization of reasoning.



\section{Conclusion}

This paper studied the (sequential) composition of propositional logic programs. We showed in our main structural result (\prettyref{thm:Horn}) that the space of all such programs forms a finite unital magma with respect to composition ordered by set inclusion, which distributes from the right over union. 
Moreover, we showed that the restricted class of propositional Krom programs is distributive and therefore its proper instance forms an idempotent semiring (\prettyref{thm:Krom}). From a logical point of view, we obtained algebraic tools for reasoning about propositional logic programs. Algebraically, we obtained a correspondence between propositional logic programs and finite unital magmas and other algebraic structures, which hopefully enables us in the future to transfer concepts from the algebraic literature to the logical setting. In a broader sense, this paper is a first step towards an algebra of logic programs and we expect interesting concepts and results to follow.

\if\isdraft0
\section*{Acknowledgments}

We would like to thank the reviewers for their thoughtful and constructive comments, and for their helpful suggestions to improve the presentation of the article.


\section*{Conflict of interest}

The authors declare that they have no conflict of interest.

\section*{Data availability statement}

The manuscript has no data associated.
\fi

\if\isdraft1\newpage\fi
\bibliographystyle{theapa}
\bibliography{/Users/christianantic/Bibdesk/Bibliography,/Users/christianantic/Bibdesk/Preprints,/Users/christianantic/Bibdesk/Publications,/Users/christianantic/Bibdesk/Unpublished}
\end{document}